\PassOptionsToPackage{colorlinks=true,allcolors=black}{hyperref}

\documentclass[11pt]{article}

\usepackage[T1]{fontenc}
\usepackage{lmodern}
\usepackage{setspace}
\usepackage{amsmath,amssymb}
\usepackage{array}
\usepackage{booktabs}
\usepackage{amsthm}
\usepackage{url}
\usepackage{natbib}
\bibpunct{(}{)}{;}{a}{}{ }
\usepackage{hyperref}
\usepackage{bookmark}
\usepackage{authblk}
\usepackage{microtype}
\usepackage{enumitem}

\setlist[itemize]{itemsep=0.2em,topsep=0.2em,parsep=0em,partopsep=0em}
\setlist[enumerate]{itemsep=0.2em,topsep=0.2em,parsep=0em,partopsep=0em}
\theoremstyle{plain}
\newtheorem{theorem}{Theorem}[section]
\newtheorem{proposition}[theorem]{Proposition}
\newtheorem{lemma}[theorem]{Lemma}
\newtheorem{corollary}[theorem]{Corollary}

\theoremstyle{definition}
\newtheorem{definition}[theorem]{Definition}
\newtheorem{assumption}[theorem]{Assumption}
\newtheorem{example}[theorem]{Example}

\theoremstyle{remark}
\newtheorem{remark}[theorem]{Remark}

\providecommand{\keywords}[1]{\vspace{0.5em}\noindent\textbf{Keywords: }\small #1}

\newcommand{\PRS}{\ensuremath{\mathsf{PRS}}}
\newcommand{\BRK}{\ensuremath{\mathsf{BRK}}}
\newcommand{\BF}{\ensuremath{\mathsf{BF}}}
\newcommand{\NEU}{\ensuremath{\mathsf{NEU}}}
\newcommand{\NAcls}{\ensuremath{\mathsf{N/A}}}

\newcommand{\LOC}{\ensuremath{\mathrm{LOC}}}
\newcommand{\OBJ}{\ensuremath{\mathrm{OBJ}}}
\newcommand{\SCOPEE}{\ensuremath{\mathrm{SCOPE\text{-}E}}}
\newcommand{\SCOPES}{\ensuremath{\mathrm{SCOPE\text{-}S}}}
\newcommand{\RULEC}{\ensuremath{\mathrm{RULE\text{-}C}}}
\newcommand{\RULES}{\ensuremath{\mathrm{RULE\text{-}S}}}

\newcommand{\Fset}{\ensuremath{\mathcal{F}}}
\newcommand{\Class}{\ensuremath{\mathrm{Class}}}
\newcommand{\Sib}{\ensuremath{\mathrm{Sib}}}
\newcommand{\IT}{\ensuremath{\mathsf{IT}}}
\newcommand{\Ker}{\ensuremath{\mathrel{\approx_{d}}}}
\newcommand{\Dop}{\ensuremath{\Delta^{\mathrm{op}}_{d}}}
\newcommand{\Dsib}{\ensuremath{\Delta^{\mathrm{sib}}}}
\newcommand{\PASS}{\textnormal{\textsc{pass}}}
\newcommand{\FAIL}{\textnormal{\textsc{fail}}}
\newcommand{\INDET}{\textnormal{\textsc{indeterminate}}}

\title{
Operational Identity:\\
A Finite Audit of Declared and Implemented\\
Rules of Sameness
}

\author{Denise M. Case}
\affil{\small School of Computer Science and Information Systems\\
Northwest Missouri State University, Maryville, MO, USA}
\date{}

\begin{document}
\maketitle
\vspace{-2em}

\begin{abstract}
  A record system declares when two records refer to the same entity, occurrence, scope, or rule.
  Its disclosed implementation mechanisms induce a corresponding operational identity relation.
  The declared and implemented relations may diverge systematically
  without producing a provenance gap or detectable contradiction.
  A system can apply, consistently and with every record individually correct,
  a rule of sameness that no artifact declares.
  This paper formalizes that implemented relation.
  A declared identity regime partitions a finite record domain into co-reference classes;
  a disclosed mechanism, through its typed identity-relevant outcomes, induces an \emph{operational identity partition} of the same domain.
  The audit compares these partitions in the refinement lattice.
  A mechanism is \emph{faithful} when the declared partition refines the operational partition, so no declared class is split.
  A \emph{divergence witness} is a pair the declaration merges and the mechanism separates; such witnesses are decidable by pair enumeration.
  When an imported sibling basis also splits a declared class, local comparison with its partition yields sibling-aligned, sub-sibling, super-sibling, or sibling-incomparable divergence.
  This classification reports only the relationship; it does not identify the basis carried by the mechanism.
  Global equality of the operational and sibling partitions is defined separately as \emph{regime substitution} and does not follow from sibling alignment.
  A version field incremented on every textual edit inhabits the sub-sibling case by splitting declared classes more finely than either imported basis.
  The audit is three-valued and relative to the disclosed artifacts, evaluated surfaces, and identified uses; each boundary has a finite refuting witness.
  A passing verdict is non-monotone because extending the transformation history can merge declared classes and create a witness among records already examined.
\end{abstract}

\keywords{
  identity;
  audit;
  co-reference;
  partitions;
  record systems;
  provenance;
  entity resolution;
  legal alignment
}

\section{Introduction}
\label{oi.sec.intro}

Modern accountability systems increasingly operate under persistent disagreement.
Participants may contest what occurred, why it occurred,
which rules apply, who bears responsibility, or what consequences should follow.
Stable reference keeps competing claims anchored to the same subject matter throughout the dispute.

The record infrastructure supporting those systems must preserve that reference
as records are transformed (e.g. amended, forked, refined, decomposed, aggregated, reclassified)
and exchanged across institutional boundaries.
It must also distinguish the identity commitments needed for shared reference
from the causal, legal, and normative conclusions that remain open to dispute.
A record system that collapses those layers may make continued reference depend
on acceptance of the interpretation under examination.

\emph{Neutral Substrates}~\citep{case2026neutral}
establishes the neutrality-by-design constraint for this setting.
The foundational layer carries stable reference and permitted attribution,
while causal and normative interpretation remains in attributed extensions.
\emph{Referential Regimes}~\citep{case2026referential}
identifies regime-relative identity bases and transformation classifications
needed to preserve reference under ordinary change.
Together, the papers describe what a system should declare;
this paper examines what deployed systems do.

\subsection{Two Answers to One Question}

A record system answers the question
\emph{do these two records refer to the same thing?}
twice.

It answers once in its \emph{declaration}.
The declaration states what must remain fixed for a referent to remain the same,
and a transformation history then determines which records co-refer.
That answer partitions the records into co-reference classes.

It answers again in its \emph{behavior}.
An application assigns referent identifiers,
resolves references, merges or forks records,
admits or rejects operations, and selects transformation rules.
Those behaviors also sort records into
groups that the system treats as the same
and groups that it treats as different.
They partition the same records a second way.

The two partitions may differ.
When they do, the system operates under a rule of sameness
that no artifact states,
and may do so with perfect internal consistency.
A schema may declare content-fixed identity for a rule.
A workflow may decide that a structural revision creates a new rule.
An application may use a version field to refuse co-reference between two texts.
Each component behaves consistently.
The combined system applies an identity model that no single artifact declares.

This failure may be invisible to ordinary consistency checking.
A system that consistently applies an undeclared sameness rule
issues no contradictory claims.
Its provenance graph may record every revision accurately.
Its audit log may record every identifier assignment faithfully.
The missing fact is the relation between two partitions.

\subsection{The Operational Identity Partition}

This paper names the second partition and makes it computable.

A disclosed implementation mechanism is admitted as an \emph{audit surface}
when its value determines a typed, finite set of identity-relevant outcomes:
referent-identifier assignment, referent co-reference, referent persistence,
transformation classification, applicability selection, or operation admissibility.
The requirement that those outcomes be determined by the mechanism's value
is what licenses reading a difference in treatment as controlled by it.
The requirement that the mechanism be drawn from a disclosed registry
is what prevents the auditor from manufacturing surfaces at will.

Grouping records by their identity-relevant outcomes
yields the \emph{operational identity partition} induced by that surface.
The declared regime yields the declared partition.
Both are partitions of the same finite record domain,
and the audit is a comparison of the two
in the refinement lattice.

Faithfulness is one direction of that comparison:
the declared partition refines the operational one,
so the surface never splits a declared co-reference class.
Its refutation is a finite object.
A \emph{divergence witness} is a pair \((r,s)\) with
\[
  r\sim_{\tau}s
  \qquad\text{and}\qquad
  \IT_d(r)\neq\IT_d(s).
\]
The declaration merges the pair.
The surface separates it.
The surface therefore carries an identity distinction the declaration does not.

\subsection{Classifying the Divergence}

Six of the nine imported regimes lie on a sibling axis,
where a second identity basis is available for the same carrier kind.
When that sibling also splits at least one examined declared class,
the divergence can be classified.
Restricting both the operational relation and the sibling relation
to the declared co-reference classes yields two equivalence relations
that fall into one of four cases:
aligned with the sibling,
strictly finer than the sibling,
strictly coarser than the sibling,
or incomparable with the sibling.

The comparison is local to the declared classes,
and that locality bounds what alignment can establish.
Sibling alignment says the surface splits declared classes as the sibling does.
It does not say the surface implements the sibling regime,
because the two may still disagree on pairs the declaration separates.
The stronger global condition,
that the operational partition equals the sibling partition on the whole domain,
is defined separately as \emph{regime substitution},
and Proposition~\ref{oi.prop.aligned-not-substitution}
shows sibling alignment does not imply it.
Regime substitution is the formal condition under which
a surface implements a hidden identity regime.

The worked version-field example occupies
the sub-sibling case rather than the aligned one.
A version field incremented on every textual edit
splits declared co-reference classes strictly more finely than the sibling.
Section~\ref{oi.sec.worked} exhibits that case,
and Proposition~\ref{oi.prop.witness-not-alignment}
proves it cannot be told from sibling alignment by any single witness pair.
Diagnosing it as regime substitution would prescribe the wrong repair.

The audit extends to the three regimes with no sibling.
Faithfulness and divergence are defined for every regime in the inventory.
A divergence is unpositioned when no sibling is available
or when the sibling makes no additional distinction
within the examined declared classes.

\subsection{Contributions}

This paper makes five contributions.

First, it defines the operational identity partition,
the rule of sameness induced
by an examined implementation surface,
as a computable object over a disclosed mechanism registry.

Second, it defines faithfulness as a refinement relation
between the declared and operational partitions,
gives the surface-level asymmetry under which a split refutes faithfulness
while a merge does not,
and supplies a finite divergence witness that refutes it.

Third, where the sibling is informative on the examined domain,
it classifies divergence against the sibling partition
in four local lattice cases,
distinguishes that classification from the global condition
under which the surface actually implements the sibling regime,
and gives a finite version-field construction
occupying the sub-sibling case.

Fourth, it proves that a passing verdict is not monotone
under extension of the transformation history alone.
Because declared co-reference is a transitive closure,
adding history to a fixed record domain
can merge previously separate declared classes
and create a witness among records the audit already examined,
with no change to the surface and no change to those records.

Fifth, it makes explicit that the audit is disclosure-relative in three respects.
The artifacts disclosed,
the surfaces evaluated over them,
and the identity-relevant uses identified for each
are all disclosure artifacts,
each carries a completeness claim the procedure cannot discharge,
and each has a witness object whose later discovery refutes that claim.

The paper introduces no record architecture and requires no reference checker.
An auditor may construct the required finite inputs
from provenance records, schemas, workflows, configuration,
source inspection, runtime traces, or differential tests.

Section~\ref{oi.sec.interface} states the imported interface and the preconditions.
Section~\ref{oi.sec.surface} defines surfaces and the operational identity partition.
Section~\ref{oi.sec.compare} compares the two partitions.
Section~\ref{oi.sec.procedure} gives decidability, the verdict, and non-monotonicity.
Section~\ref{oi.sec.worked} works a legal-alignment example.
Section~\ref{oi.sec.disclosure} states the disclosure boundary.
Section~\ref{oi.sec.related} situates the result.
Section~\ref{oi.sec.limits} states its limits, and
Section~\ref{oi.sec.conclusion} concludes.

\section{Imported Interface and Preconditions}
\label{oi.sec.interface}

This paper uses a small interface imported from
\emph{Neutral Substrates} and \emph{Referential Regimes}.
The earlier papers supply the neutrality argument,
carrier inventory, transformation basis,
identity-basis construction, regime assignments,
and nine-regime lower bound.
The present result needs regime-relative classification,
induced co-reference, and the three sibling axes.

\subsection{Regimes and Declared Co-reference}

\begin{definition}[Identity Regime]
  \label{oi.def.regime}
  An \emph{identity regime} \(\rho\) consists of an identity basis
  and a classification function
  \[
    \Class_{\rho}:\Fset\longrightarrow
    \{\PRS,\BRK,\NEU,\NAcls\}.
  \]
  The identity basis states what must remain fixed
  for a referent to remain the same.
  The classification function states whether each transformation family
  preserves identity, breaks identity, is identity-neutral,
  or is inapplicable under that basis.
\end{definition}

Let a finite family-labeled transformation history be
\[
  H=\langle(f_i,x_i,y_i)\rangle_{i=1}^{n},
\]
where \(f_i\in\Fset\) is the transformation family
and \(x_i,y_i\in R\) are its source and result records.
The family label is regime-independent.
The classification changes with the regime.
For histories \(H\) and \(H^{\ast}\), write \(H\preceq H^{\ast}\)
when \(H\) is a subsequence of \(H^{\ast}\),
so \(H^{\ast}\) extends \(H\) without removing or changing
any transformation already recorded.

\begin{definition}[Non-breaking Edge Set]
  \label{oi.def.nonbreaking}
  For a regime \(\rho\), define
  \[
    E_{\rho}
    =
    \{
    (x_i,y_i)
    \mid
    \Class_{\rho}(f_i)\in\{\PRS,\NEU\}
    \}.
  \]
\end{definition}

\begin{remark}[Why Neutral Families Contribute Edges]
  \label{oi.rem.neutral}
  A family classified \PRS{} under \(\rho\)
  acts on the \(\rho\)-identity basis and leaves it fixed.
  A family classified \NEU{} under \(\rho\) does not act on that basis at all.
  Neither can separate referents under \(\rho\),
  so both yield non-breaking edges,
  for different reasons.
  Only \BRK{} severs the relation.
  A family classified \NAcls{} does not apply to the carrier kind,
  and Definition~\ref{oi.def.precond} excludes it
  from a well-formed history for records of that kind.
\end{remark}

\begin{definition}[Declared Identity Relation and Partition]
  \label{oi.def.declared}
  The declared identity relation \(\sim_{\rho}\)
  is the equivalence relation generated by \(E_{\rho}\).
  Its quotient
  \[
    \pi_{\rho}
    =
    R\,/\!\sim_{\rho}
  \]
  is the \emph{declared identity partition} of the record domain \(R\).
  Each block is a set of records the regime treats
  as representations of one referent.
\end{definition}

\begin{definition}[Core Sibling Axes]
  \label{oi.def.axes}
  The core sibling axes are
  \[
    \mathcal{A}
    =
    \bigl\{
    \{\LOC,\OBJ\},
    \{\SCOPEE,\SCOPES\},
    \{\RULEC,\RULES\}
    \bigr\}.
  \]
  Each pair supplies two imported identity bases for one carrier kind:
  locus-fixed and object-fixed identity for plain referents,
  extension-fixed and structure-fixed identity for scopes,
  and content-fixed and structure-fixed identity for rules.
\end{definition}

\begin{definition}[Sibling Function]
  \label{oi.def.sibling}
  For a regime \(\tau\), define
  \[
    \Sib(\tau)
    =
    \begin{cases}
      \tau',
       & \text{when }\{\tau,\tau'\}\in\mathcal{A};
      \\
      \varnothing,
       & \text{when \(\tau\) lies on no imported sibling axis.}
    \end{cases}
  \]
\end{definition}

\begin{definition}[Imported Referential-Regime Interface]
  \label{oi.def.imported-interface}
  The \emph{imported referential-regime interface} is
  \[
    \mathfrak{R}
    =
    \bigl(
    \Fset,
    \mathcal{P},
    \{\Class_{\rho}\}_{\rho\in\mathcal{P}},
    \Sib
    \bigr),
  \]
  where:
  \begin{enumerate}[label=(\alph*)]
    \item \(\Fset\) is the finite imported transformation-family set;
    \item \(\mathcal{P}\) is the finite imported regime inventory;
    \item each
          \[
            \Class_{\rho}:
            \Fset\longrightarrow
            \{\PRS,\BRK,\NEU,\NAcls\}
          \]
          is decidable; and
    \item \(\Sib\) is the imported partial sibling function.
  \end{enumerate}
\end{definition}

\begin{remark}[Interface Revision]
  \label{oi.rem.interface-revision}
  Every declared partition,
  sibling partition,
  divergence classification,
  and audit verdict in this paper
  is relative to a fixed imported interface \(\mathfrak{R}\).

  Adding a transformation family,
  adding an identity regime,
  revising a classification function,
  or revising the sibling relation
  produces a new interface version.
  The declared and sibling partitions must then be recomputed,
  the finite audit must be rerun,
  and inventory-relative examples and claims must be rechecked.

  The general definitions of operational identity,
  faithfulness,
  divergence,
  store composition,
  and finite decidability remain applicable
  when the revised interface remains finite,
  carrier-consistent,
  and decidable.
\end{remark}

Three regimes in the imported inventory lie on no sibling axis.
The audit of Section~\ref{oi.sec.compare} decides faithfulness for all nine.
Only the diagnosis of a failure uses \(\Sib(\tau)\),
and Section~\ref{oi.sec.compare} states what is returned
when \(\Sib(\tau)=\varnothing\).

\begin{remark}[Siblings Need Not Refine One Another]
  \label{oi.rem.norefine}
  Neither declared relation on a sibling axis refines the other in general.
  A structural revision preserving rule content
  is non-breaking under \RULEC{} and breaking under \RULES{}.
  A wording change altering content while preserving section structure
  is breaking under \RULEC{} and non-breaking under \RULES{}.
  The two relations cross.
  The comparison of Section~\ref{oi.sec.compare}
  is therefore carried out inside declared co-reference classes,
  where a containment can be asked for,
  rather than globally, where none may be assumed.
\end{remark}

\subsection{Preconditions on the Audit Artifact}

The procedure consumes a declared regime assignment and a family-labeled history.
It does not establish that either is correct.

\begin{definition}[Audit Preconditions]
  \label{oi.def.precond}
  For an examined record domain \(R\)
  of a single carrier kind with declared regime \(\tau\):
  \begin{enumerate}[label=(P\arabic*)]
    \item \emph{Regime determinacy.}
          Each record in \(R\) has exactly one effective declared regime,
          and that regime is \(\tau\).
    \item \emph{History well-formedness.}
          Every transformation in \(H\) acts on the carrier kind of \(R\),
          so \(\Class_{\rho}(f_i)\neq\NAcls\) for every \(i\)
          and every regime \(\rho\) assigned to that kind.
    \item \emph{History-label validity.}
          Each family label \(f_i\) accurately names
          the transformation performed.
          The declared identity effect of that transformation
          is obtained by applying
          \(\Class_{\tau}\) to \(f_i\).
          Any identity effect applied by the implementation
          is part of the identity-relevant treatment being audited
          and is not assumed to agree with the declaration.
  \end{enumerate}
\end{definition}

\begin{remark}[Preconditions Are Inputs]
  \label{oi.rem.precond}
  The preconditions are assumptions on the audit artifact.
  The declared partition is computed
  from the regime assignment and the recorded family labels.
  An indeterminate regime assignment,
  an unresolved endpoint,
  an unavailable family label,
  or a family label that does not accurately name the transformation performed
  makes the declared partition unreliable.
  A comparison against an unreliable declared partition
  carries no information about the surface.
  Failure to discharge one of those conditions yields \INDET{},
  not \PASS{} or \FAIL{}.

  A difference between the identity effect
  prescribed by \(\Class_{\tau}(f_i)\)
  and the identity effect applied by the implementation
  is not a precondition failure.
  It is a candidate operational-identity divergence.
\end{remark}

\section{Surfaces and the Operational Identity Partition}
\label{oi.sec.surface}

Identity behavior may be implemented outside a formal regime declaration.
A role may select an identifier policy.
A workflow state may determine whether two revisions co-refer.
A schema convention may change persistence handling.
An application predicate may route records through different transformation rules.

The presence of such a mechanism is not a failure.
A mechanism may serve functions unrelated to identity.
The audit must first say what makes a mechanism identity-bearing.

\subsection{The Surface Registry}

\begin{definition}[Mechanism Registry]
  \label{oi.def.registry}
  Let \(B\) be an identified implementation boundary
  and \(R\) the examined record domain.
  A \emph{mechanism registry} \(D_B\) is a finite set
  of implementation artifacts within \(B\).

  Each \(d\in D_B\) is an identifiable field,
  workflow state,
  schema convention,
  configuration value,
  application predicate,
  or policy input,
  together with a total computable observation map
  \[
    d:R\longrightarrow V_d.
  \]
  A missing or inapplicable value is represented
  by a distinguished value in \(V_d\).

  Each artifact is disclosed by the operator
  or independently discovered by the auditor.
\end{definition}

\begin{remark}[Why the Registry Is Necessary]
  \label{oi.rem.registry}
  Every function on the record domain induces a partition of it.
  Without the registry constraint,
  an auditor could take the identity function on records as a mechanism,
  which separates every distinct pair
  and therefore reports a divergence in any system
  that does anything at all with two co-referring records.
  The registry is what makes a surface an artifact of the implementation
  rather than an artifact of the audit.
  A candidate mechanism must be exhibitable:
  the auditor must be able to identify the implementation artifact that carries it.
\end{remark}

\begin{definition}[Joint Mechanism]
  \label{oi.def.joint}
  For a nonempty finite \(S\subseteq D_B\), the \emph{joint mechanism} \(d_S\) is
  \[
    d_S(r)
    =
    \bigl\langle d(r) \;\bigm|\; d\in S \bigr\rangle,
  \]
  under a fixed order on \(S\).
  Write \(\mathcal{D}_B\) for the set of joint mechanisms over nonempty finite subsets of \(D_B\),
  which contains each \(d\in D_B\) as the singleton case.
\end{definition}

The auditor must exhibit the constituent mechanisms of a joint surface,
so \(\mathcal{D}_B\) is finite,
every element of it is an exhibitable composite of registered artifacts,
and the identity function on records is not among them
unless \(D_B\) already separates every pair of records,
in which case the implementation has itself disclosed a mechanism that does so.

\subsection{Identity-Relevant Outcomes}

\begin{definition}[Identity-Relevant Outcome Types]
  \label{oi.def.outcomes}
  An outcome is \emph{identity-relevant}
  when it is of one of the following types:
  \begin{enumerate}[label=(\alph*)]
    \item \emph{referent-identifier assignment:}
          which referent identifier a record is bound to;
    \item \emph{referent resolution:}
          the referent identifier,
          referent key,
          or co-reference class
          to which the system resolves a record;
    \item \emph{referent persistence:}
          whether a referent is continued, retired, or created;
    \item \emph{transformation treatment:}
          a finite record-indexed encoding
          of the transformation family
          or identity effect applied to the record
          within the examined history;
    \item \emph{applicability selection:}
          which identity rule or basis governs a record; or
    \item \emph{operation admissibility:}
          whether an operation preserving or breaking identity is permitted.
  \end{enumerate}
\end{definition}

\begin{remark}[Scope of the Outcome Vocabulary]
  \label{oi.rem.outcome-vocabulary}
  Definition~\ref{oi.def.outcomes}
  fixes the identity-relevant outcome vocabulary for the present audit.
  The enumeration leaves open whether other audit specifications
  may admit additional identity-relevant outcome types.
  A revision of the finite outcome vocabulary produces a revised audit specification,
  under which the identified uses, identity-treatment signatures,
  and operational partitions must be recomputed.
\end{remark}

\begin{remark}[Record Identifiers Are Not Identity-Relevant]
  \label{oi.rem.recordid}
  Every record in a well-formed store carries a distinct record identifier by construction.
  A record identifier individuates the representation, rather than the referent.
  Two records co-referring under any regime therefore differ in record identifier,
  and admitting that difference as an identity-relevant outcome
  would produce a divergence witness in every system, including every correct one.
  The outcome types of Definition~\ref{oi.def.outcomes}
  are stated over referents throughout.
  The distinction between the two identifier layers
  is a precondition on the audit vocabulary, rather than a matter of exposition.
\end{remark}

\begin{remark}[Record-Indexed Encoding]
  \label{oi.rem.record-indexed}
  The operational partition is induced by total record-indexed outcome functions.

  A pairwise co-reference procedure may be represented,
  over finite \(R\), by the vector of its decisions against the records of \(R\)
  under a fixed order.
  A transformation-level procedure may be represented by a fixed finite record-indexed encoding
  of the identity-relevant transformation outcomes incident to each record.

  This encoding is part of the surface's identified use:
  it must be a total function of the record
  factoring through the mechanism value
  in the sense of Definition~\ref{oi.def.surface}(c),
  and the operational partition
  induced by a transformation-treatment outcome
  is defined relative to it.
  Two auditors who fix different such encodings
  evaluate different uses,
  and the disclosure-relativity of
  Section~\ref{oi.sec.disclosure} already ranges over that choice.

  When no finite total record-indexed encoding is available,
  the procedure does not induce the operational partition defined here,
  and the verdict is \INDET{}.
\end{remark}

\subsection{Surfaces}

\begin{definition}[Audit Surface]
  \label{oi.def.surface}
  An \emph{audit surface} over a record domain \(R\)
  of one carrier kind with declared regime \(\tau\) consists of:
  \begin{enumerate}[label=(\alph*)]
    \item a mechanism \(d\in\mathcal{D}_B\) with a computable value function
          \[
            d:R\longrightarrow V_d;
          \]
    \item a finite sequence of \emph{identified} identity-relevant uses
          \[
            U_d
            =
            \langle u_1,\ldots,u_m\rangle,
            \qquad
            u_j:R\longrightarrow O_j,
          \]
          each of a type drawn from Definition~\ref{oi.def.outcomes},
          with finite, computable outcome for every \(r\in R\).
          The sequence records the uses the audit has found,
          and Section~\ref{oi.sec.disclosure} states the completeness claim
          under which it may be treated as all of them;
    \item \emph{surface control:} each use factors through the mechanism value,
          so that for every \(j\) there exists
          \[
            \tilde{u}_j:V_d\longrightarrow O_j
            \qquad\text{with}\qquad
            u_j=\tilde{u}_j\circ d
            \quad\text{on }R.
          \]
  \end{enumerate}
\end{definition}

\begin{remark}[Surface Granularity]
  \label{oi.rem.granularity}
  Clause (c) is an admissibility condition on how the auditor identifies the mechanism.
  When the identity-relevant uses of a candidate mechanism
  do not factor through its value,
  the outcome is jointly controlled by that mechanism and others,
  and the auditor escalates to the joint mechanism \(d_S\)
  of Definition~\ref{oi.def.joint}
  for a subset \(S\subseteq D_B\) that together determines the outcome.
  The escalation ranges over \(\mathcal{D}_B\) and terminates,
  because \(D_B\) is finite.
  It cannot terminate in an arbitrary function of the record,
  which Remark~\ref{oi.rem.registry} excludes.
  When no \(S\subseteq D_B\) yields factorization,
  the audit returns \INDET{}:
  the disclosed registry does not contain the mechanisms
  that determine the outcome.
\end{remark}

\begin{definition}[Examined Surface Family]
  \label{oi.def.family}
  An \emph{examined surface family} \(A_B\)
  is a finite set of audit surfaces over \(\mathcal{D}_B\),
  each supplied with its own identified-use sequence.
  The family is what the audit actually evaluates.
  It contains the singleton surfaces the auditor admitted
  and any joint surfaces of Definition~\ref{oi.def.joint}
  to which granularity escalation carried them.
\end{definition}

The distinction between \(D_B\), \(\mathcal{D}_B\), and \(A_B\) does work.
\(D_B\) is the disclosed inventory of implementation artifacts.
\(\mathcal{D}_B\) is its closure under joint value,
which is what makes surface control attainable.
\(A_B\) is the finite subfamily the audit evaluated,
and it is the family over which store-level statements are made.
An outcome jointly controlled by two registered fields
and factoring through neither alone
appears in \(A_B\) as a joint surface and nowhere else,
so store-level conclusions must range over \(A_B\)
rather than over \(D_B\).

\begin{definition}[Identity-Treatment Signature and Operational Partition]
  \label{oi.def.signature}
  For an audit surface \(d\) and record \(r\in R\), define
  \[
    \IT_d(r)
    =
    \langle
    u_1(r),\ldots,u_m(r)
    \rangle.
  \]
  Write \(r\Ker s\) when \(\IT_d(r)=\IT_d(s)\).
  The relation \(\Ker\) is an equivalence relation on \(R\),
  and its quotient
  \[
    \pi_d
    =
    R\,/\Ker
  \]
  is the \emph{operational identity partition} induced by \(d\).
\end{definition}

The blocks of \(\pi_d\) are the sets of records
to which the implementation gives the same identity-relevant treatment
through the identified uses of \(d\).
It is derived from behavior,
and it is available to an auditor
whether or not the system's declaration mentions it.
It is the rule of sameness the mechanism applies
precisely to the extent that \(U_d\) is complete,
which is a disclosure claim and not a computation.
Definition~\ref{oi.def.usecomplete} states that claim,
and Definition~\ref{oi.def.omitteduse} states the object that refutes it.

\begin{lemma}[Treatment Is Determined by Mechanism Value]
  \label{oi.lem.determined}
  Let \(d\) be an audit surface and \(r,s\in R\).
  If \(d(r)=d(s)\) then \(r\Ker s\).
  Equivalently, \(\IT_d(r)\neq\IT_d(s)\) implies \(d(r)\neq d(s)\).
\end{lemma}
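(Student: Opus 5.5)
The argument unfolds the definitions and uses surface control, clause (c) of Definition~\ref{oi.def.surface}, componentwise. Fix an audit surface \(d\) with identified-use sequence \(U_d=\langle u_1,\ldots,u_m\rangle\). By clause (c), for each \(j\) there is a function \(\tilde{u}_j:V_d\to O_j\) with \(u_j=\tilde{u}_j\circ d\) on \(R\). Assume \(r,s\in R\) satisfy \(d(r)=d(s)\). Applying \(\tilde{u}_j\) to both sides gives \(u_j(r)=\tilde{u}_j(d(r))=\tilde{u}_j(d(s))=u_j(s)\) for every \(j\in\{1,\ldots,m\}\).

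Since the signature \(\IT_d(r)\) of Definition~\ref{oi.def.signature} is the tuple \(\langle u_1(r),\ldots,u_m(r)\rangle\), and two finite tuples of the same length are equal when they agree coordinatewise, we get \(\IT_d(r)=\IT_d(s)\). By definition this is \(r\Ker s\). The equivalent form, that \(\IT_d(r)\neq\IT_d(s)\) implies \(d(r)\neq d(s)\), is the contrapositive and needs no separate argument.

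There is no substantive obstacle. The only points to check are these:
\begin{itemize}
  \item The factorization in clause (c) is required to hold on all of \(R\), so it applies to both \(r\) and \(s\).
  \item The degenerate case \(m=0\) holds trivially, since all signatures are then the empty tuple.
  \item When \(d\) is a joint mechanism \(d_S\) from Definition~\ref{oi.def.joint}, the same argument applies unchanged, with \(d(r)\) read as the ordered tuple of constituent values.
\end{itemize}

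In other words, the lemma states that \(\pi_d\) is coarsened by the kernel partition of \(d\), so the mechanism value determines the treatment.
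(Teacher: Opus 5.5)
Your proposal is correct and matches the paper's proof essentially step for step: surface control gives \(u_j=\tilde{u}_j\circ d\) on \(R\), so \(d(r)=d(s)\) forces \(u_j(r)=u_j(s)\) for every \(j\), hence equal signatures, and the second form is the contrapositive. One small wording point: your closing sentence is clearer as ``the partition of \(R\) by mechanism value refines \(\pi_d\)'' (equivalently, \(\pi_d\) is coarser than it), since ``\(\pi_d\) is coarsened by the kernel partition'' can be read in the wrong direction.
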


\begin{proof}
  By surface control, \(u_j=\tilde{u}_j\circ d\) for each \(j\).
  If \(d(r)=d(s)\) then
  \(u_j(r)=\tilde{u}_j(d(r))=\tilde{u}_j(d(s))=u_j(s)\)
  for every \(j\),
  so the signatures agree.
  The second statement is the contrapositive.
\end{proof}

Lemma~\ref{oi.lem.determined} says that \(\pi_d\) is coarser than
the partition of \(R\) by mechanism value.
A difference in identity treatment entails a difference in mechanism value,
so an auditor who exhibits the former has already exhibited the latter.
The witness of Section~\ref{oi.sec.compare}
therefore needs no separate clause on mechanism values,
and reports them as evidence rather than requiring them as a condition.

The same fact disposes of the boundary case in
Definition~\ref{oi.def.joint}.
A joint mechanism whose value separates every pair of records,
that is, the case in which the registry is itself injective,
still induces \(\pi_d\) through its uses,
and those uses may merge records the value distinguishes.
Value injectivity therefore splits no declared class on its own,
and excluding the record-identity function (Remark~\ref{oi.rem.registry})
costs the auditor nothing on a system
that has genuinely disclosed a pair-separating mechanism.

\section{Comparing Declared and Operational Identity}
\label{oi.sec.compare}

Both \(\pi_{\tau}\) and \(\pi_d\) are partitions of the same finite domain \(R\).
The audit is their comparison in the refinement lattice.
Throughout, a partition \(\pi\) \emph{refines} \(\sigma\)
when every block of \(\pi\) is contained in a block of \(\sigma\),
equivalently when \(\sim_{\pi}\,\subseteq\,\sim_{\sigma}\) as relations.

\subsection{Faithfulness}

\begin{definition}[Faithful Surface]
  \label{oi.def.faithful}
  An audit surface \(d\) is \emph{faithful} to the declared regime \(\tau\) on \(R\)
  when \(\pi_{\tau}\) refines \(\pi_d\), that is, when
  \[
    r\sim_{\tau}s
    \;\Longrightarrow\;
    r\Ker s
    \qquad\text{for all }r,s\in R.
  \]
\end{definition}

A faithful surface never splits a declared co-reference class.
It may merge them.

\begin{remark}[Why Faithfulness Is One-Directional]
  \label{oi.rem.asymmetry}
  The two directions of the refinement relation are not symmetric faults,
  and the asymmetry follows from what a single surface is.

  A surface controls some identity-relevant outcomes.
  It does not carry the system's whole identity assignment.
  When a surface merges two records the declaration separates,
  it reports only that this mechanism does not itself
  distinguish those referents,
  and another mechanism may.
  A version field that is identical across two unrelated rules
  has told the audit nothing false.

  When a surface splits a declared co-reference class,
  the system has already given the two records
  different referent identifiers,
  refused to resolve them together,
  or applied different persistence handling.
  No further mechanism can retract that treatment.
  The distinction has been made.

  Conflation of declared-distinct referents is therefore
  a property of the store rather than of a surface,
  and it lies outside the scope of a single-surface audit.
  What a surface can be held to
  is that it introduces no distinction the declaration lacks.
\end{remark}

\begin{definition}[Divergence Witness]
  \label{oi.def.divergence}
  A pair \((r,s)\in R\times R\)
  is a \emph{divergence witness} for \((d,\tau)\) when
  \[
    r\sim_{\tau}s
    \qquad\text{and}\qquad
    \IT_d(r)\neq\IT_d(s).
  \]
\end{definition}

\begin{proposition}[Divergence Refutes Faithfulness]
  \label{oi.prop.refutes}
  A surface \(d\) is faithful to \(\tau\) on \(R\)
  if and only if no divergence witness for \((d,\tau)\) exists in \(R\times R\).
  Every divergence witness satisfies \(d(r)\neq d(s)\).
\end{proposition}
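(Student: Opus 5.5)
The plan is to unfold Definition~\ref{oi.def.faithful} and Definition~\ref{oi.def.divergence} and observe that one is the negation of the other. After that, the second sentence of the statement will follow directly from Lemma~\ref{oi.lem.determined}. Nothing beyond the definitions and that lemma should be needed.

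For the biconditional, I will write faithfulness as the universally quantified implication
\[
  \forall r,s\in R:\; r\sim_{\tau}s \Longrightarrow \IT_d(r)=\IT_d(s),
\]
using that \(r\Ker s\) is by Definition~\ref{oi.def.signature} exactly the equality of signatures. Its negation is the existence of \(r,s\in R\) with \(r\sim_{\tau}s\) and \(\IT_d(r)\neq\IT_d(s)\). That existential statement is verbatim the existence of a divergence witness in \(R\times R\). Hence \(d\) is faithful if and only if no witness exists.

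I will present both directions briefly for readability. If \(d\) is faithful, any pair with \(r\sim_\tau s\) has equal signatures, so no pair is a witness. Conversely, if no witness exists, then every \(\sim_\tau\)-related pair fails the second clause, so its signatures agree.

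For the second claim, I take a witness \((r,s)\), so \(\IT_d(r)\neq\IT_d(s)\). The contrapositive form of Lemma~\ref{oi.lem.determined}, which rests on surface control in Definition~\ref{oi.def.surface}(c), then gives \(d(r)\neq d(s)\). Note that the declared-relation clause \(r\sim_\tau s\) plays no role in this step.

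No step is a genuine obstacle. The only care point is to make explicit that the relevant notion of \emph{refines} is the relational inclusion \(\sim_{\tau}\subseteq\Ker\), as fixed at the start of this section. This ensures that the partition-level statement in Definition~\ref{oi.def.faithful} and the pairwise statement agree, since both \(\pi_\tau\) and \(\pi_d\) are quotients by equivalence relations on the same finite \(R\).
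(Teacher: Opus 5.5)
Your proposal is correct and follows the same route as the paper: faithfulness is the universally quantified implication \(r\sim_{\tau}s \Rightarrow r\approx_{d}s\), and its negation at a pair is exactly the divergence-witness condition. The second claim is, as you say, the contrapositive of Lemma~\ref{oi.lem.determined}.
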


\begin{proof}
  Definition~\ref{oi.def.faithful} is a universally quantified implication
  whose negation at a pair is exactly Definition~\ref{oi.def.divergence}.
  The second statement is Lemma~\ref{oi.lem.determined}.
\end{proof}

A divergence witness is an audit finding on its own.
It shows that identity treatment in the deployed system
depends on a distinction the declared regime does not make.
It is available for every regime in the inventory,
including the three that lie on no sibling axis.
It does not yet say which distinction.

\subsection{Faithfulness Composes}

A single surface carries part of the system's identity treatment.
The examined identity treatment is carried by \(A_B\).

\begin{definition}[Store Operational Partition]
  \label{oi.def.store}
  For an examined surface family \(A_B\) on \(R\), define
  \[
    \approx_{B}
    \;=\;
    \bigcap_{d\in A_B}\;\Ker,
    \qquad
    \pi_B
    =
    R\,/\approx_{B}.
  \]
  Two records are operationally the same in the store
  when every examined surface,
  singleton or joint,
  gives them the same identity-relevant treatment.
\end{definition}

\begin{proposition}[Store Faithfulness Is Surface Faithfulness]
  \label{oi.prop.compose}
  The declared partition \(\pi_{\tau}\) refines \(\pi_B\)
  if and only if every \(d\in A_B\) is faithful to \(\tau\) on \(R\).
  A divergence witness for any surface in \(A_B\)
  is a divergence witness for the store.
\end{proposition}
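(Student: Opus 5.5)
The plan is to unfold the definitions into relation inclusions and use the universal property of intersection. By the convention at the start of this section, \(\pi_{\tau}\) refines \(\pi_B\) exactly when \(\sim_{\tau}\,\subseteq\,\approx_B\) as relations on \(R\). By Definition~\ref{oi.def.store}, \(\approx_B=\bigcap_{d\in A_B}\Ker\). The first claim then reduces to the lattice fact that a relation lies in an intersection iff it lies in each intersectand:
\[
  \sim_{\tau}\;\subseteq\;\bigcap_{d\in A_B}\Ker
  \quad\Longleftrightarrow\quad
  \sim_{\tau}\;\subseteq\;\Ker\ \text{ for every } d\in A_B .
\]
By Definition~\ref{oi.def.faithful}, the right-hand condition is exactly faithfulness of each \(d\in A_B\).

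Both directions are elementwise. For the forward direction, take \(r\sim_{\tau}s\). Then \(r\approx_B s\), so \(r\Ker s\) for each \(d\), since an intersection is contained in each member. For the reverse direction, take \(r\sim_{\tau}s\). Each faithful \(d\) gives \(r\Ker s\), so the pair lies in every intersectand and hence in \(\approx_B\).

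The second claim follows directly. Suppose \((r,s)\) is a divergence witness for some \(d_0\in A_B\), so \(r\sim_{\tau}s\) and \(\IT_{d_0}(r)\neq\IT_{d_0}(s)\). Then \((r,s)\notin\,\approx_{d_0}\), hence \((r,s)\notin\,\approx_B\). The pair is therefore declared-merged and store-separated, which is the store analogue of Definition~\ref{oi.def.divergence}. By Proposition~\ref{oi.prop.refutes} it also satisfies \(d_0(r)\neq d_0(s)\).

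The main point needing care is not a calculation but degenerate cases and the range of quantification. If \(A_B=\varnothing\), the empty intersection must be read as the total relation on \(R\). The biconditional then holds vacuously: \(\pi_B\) is the one-block partition, which every partition refines. I would also note that the quantification ranges over \(A_B\), including joint surfaces, and not over \(D_B\), in line with the remark following Definition~\ref{oi.def.family}. Finally, I would observe that the store relation is an equivalence relation, because an intersection of equivalence relations is one, so \(\pi_B\) is well defined.
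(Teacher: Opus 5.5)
Your proof is correct and matches the paper's: both reduce the claim to the fact that \(\sim_{\tau}\) lies in the intersection \(\approx_B=\bigcap_{d\in A_B}\approx_d\) exactly when it lies in each \(\approx_d\), and both treat the witness statement as a failure of one conjunct being a failure of the whole. Your additional remarks are sound details the paper leaves implicit: the empty family \(A_B=\varnothing\) gives the total relation, and \(\approx_B\) is an equivalence relation.
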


\begin{proof}
  \(\approx_{B}\) is an intersection over \(A_B\),
  so \(\sim_{\tau}\subseteq\;\approx_{B}\)
  holds exactly when \(\sim_{\tau}\subseteq\;\Ker\) for every \(d\in A_B\).
  A pair witnessing the failure of one conjunct witnesses the failure of the whole.
\end{proof}

Faithfulness therefore composes.
A divergence on any surface refutes store faithfulness,
while the store passes only when every surface in \(A_B\) is faithful.
The audit may proceed one surface at a time
while retaining those quantified conclusions.

The quantification over \(A_B\) rather than \(D_B\) is not cosmetic.
An outcome jointly controlled by two registered fields, factoring through neither alone, is carried by a joint surface.
That surface may split a declared co-reference class
while each constituent field, taken alone, splits nothing.
Intersecting only over the singleton mechanisms of \(D_B\)
would omit the surface that carries the fault,
and Proposition~\ref{oi.prop.compose} would then be false.

Under the completeness claims of Section~\ref{oi.sec.disclosure},
\(\pi_B\) is the rule of sameness the system implements.
Proposition~\ref{oi.prop.compose} licenses a store-level failure
from any failing surface and a store-level pass
only from faithfulness of every surface in \(A_B\).

\subsection{The Lattice Comparison}

Diagnosis restricts both relations to the declared co-reference classes.

\begin{definition}[Restricted Relations]
  \label{oi.def.restricted}
  Let \(\tau'=\Sib(\tau)\neq\varnothing\).
  Define the equivalence relations
  \[
    \Dop
    =
    \;\Ker\;\cap\;\sim_{\tau},
    \qquad
    \Dsib
    =
    \;\sim_{\tau'}\;\cap\;\sim_{\tau}.
  \]
\end{definition}

Both are equivalence relations contained in \(\sim_{\tau}\).
\(\Dop\) records which pairs the declaration merges
and the surface also merges.
\(\Dsib\) records which pairs the declaration merges
and the sibling basis also merges.
Restricting to \(\sim_{\tau}\) is what makes the comparison well posed,
given Remark~\ref{oi.rem.norefine}.

Faithfulness is now the statement \(\Dop=\;\sim_{\tau}\).
When it fails, \(\Dop\) is a proper subrelation of \(\sim_{\tau}\).
If \(\Dsib\) is also a proper subrelation of \(\sim_{\tau}\),
the two restricted relations fall into one of four cases.
If \(\Dsib=\;\sim_{\tau}\),
the divergence is unpositioned.

\begin{definition}[Divergence Classification]
  \label{oi.def.classification}
  Suppose \(d\) is not faithful to \(\tau\) on \(R\),
  that \(\Sib(\tau)=\tau'\neq\varnothing\),
  and that \(\Dsib\neq\;\sim_{\tau}\),
  so the sibling basis also splits some declared class.
  Exactly one of the following holds:
  \begin{enumerate}[label=(\alph*)]
    \item \(\Dop=\Dsib\):
          \emph{sibling-aligned divergence}.
          Inside the declared classes,
          the surface splits exactly as the sibling does.
    \item \(\Dop\subsetneq\Dsib\):
          \emph{sub-sibling divergence}.
          Inside the declared classes,
          the surface splits strictly more finely than the sibling.
    \item \(\Dsib\subsetneq\Dop\):
          \emph{super-sibling divergence}.
          Inside the declared classes,
          the surface makes some but not all of the sibling's distinctions.
    \item \(\Dop\) and \(\Dsib\) are incomparable:
          \emph{sibling-incomparable divergence}.
          The surface makes distinctions the sibling does not
          and omits distinctions the sibling makes.
  \end{enumerate}
  When \(\Sib(\tau)=\varnothing\), or when \(\Dsib=\;\sim_{\tau}\),
  the divergence is \emph{unpositioned}:
  the surface splits a declared class,
  and the imported inventory supplies no sibling distinction on \(R\)
  against which the split can be located.
\end{definition}

Under the hypotheses of Definition~\ref{oi.def.classification},
the four cases are exhaustive and mutually exclusive,
being the four possible relative positions
of two elements of a partially ordered set
under the containment order.

\begin{remark}[What the Classification Does Not Say]
  \label{oi.rem.local}
  Definition~\ref{oi.def.classification} compares
  \(\Dop\) and \(\Dsib\),
  both of which are intersected with \(\sim_{\tau}\).
  The comparison is therefore local to the declared co-reference classes,
  and it is silent about pairs the declaration separates.
  Sibling-aligned divergence establishes that
  the surface splits declared classes as the sibling would.
  It does not establish that
  \(\Ker\;=\;\sim_{\tau'}\) on \(R\),
  because the surface may treat a \(\tau\)-separated pair
  in a way the sibling does not.
  The names of the four cases are relational for that reason,
  and none of them asserts that the surface implements a named basis.
\end{remark}

\begin{definition}[Regime Substitution]
  \label{oi.def.regime-substitution}
  A surface \(d\) exhibits \emph{regime substitution} on \(R\)
  when \(\Sib(\tau)=\tau'\neq\varnothing\),
  the regimes differ on \(R\),
  and the operational partition equals the sibling partition,
  \[
    \pi_d
    =
    \pi_{\tau'},
    \qquad\text{equivalently}\qquad
    \Ker\;=\;\sim_{\tau'}
    \text{ on }R.
  \]
  A surface exhibiting regime substitution
  implements a \emph{hidden identity regime}:
  a consistently applied, undeclared rule of sameness
  that the imported inventory already names.
\end{definition}

\begin{example}[A Forked Continuation on the Locus/Object Axis]
  \label{oi.ex.loc-obj-fork}
  Let \(r_0\) record a monitoring station at site \(\ell\) using sensor \(o_0\).
  A branch/fork creates two continuation records \(r_1\) and \(r_2\)
  for monitoring at the same site \(\ell\),
  using sensor objects \(o_1\) and \(o_2\), with \(o_0\), \(o_1\), and \(o_2\) carrying distinct serials.
  Let
  \[
    R=\{r_0,r_1,r_2\}
  \]
  and
  \[
    H
    =
    \bigl\langle
    (\BF{},r_0,r_1),
    (\BF{},r_0,r_2)
    \bigr\rangle.
  \]

  In the imported referential-regime interface,
  \[
    \Class_{\LOC}(\BF{})=\PRS
    \qquad\text{and}\qquad
    \Class_{\OBJ}(\BF{})=\BRK.
  \]
  Both fork edges are therefore non-breaking under \LOC{},
  giving
  \[
    \pi_{\LOC}
    =
    \bigl\{\{r_0,r_1,r_2\}\bigr\}.
  \]
  Both are breaking under \OBJ{},
  giving
  \[
    \pi_{\OBJ}
    =
    \bigl\{\{r_0\},\{r_1\},\{r_2\}\bigr\}.
  \]

  Suppose the declared regime is \LOC{}.
  Let \(d\) be the registered sensor-serial field,
  with a referent-identifier use that assigns
  a distinct station identifier to each sensor serial,
  so the station's referent identifier follows the sensor object.
  Then
  \[
    \pi_d=\pi_{\OBJ}.
  \]
  The pair \((r_0,r_1)\) is a divergence witness,
  because the declaration merges it
  and the surface separates it.
  The divergence is sibling-aligned,
  and the equality \(\pi_d=\pi_{\OBJ}\)
  establishes regime substitution.
\end{example}

Regime substitution is a global equality condition on \(R\).
It is distinct from sibling alignment, which compares the relations
only within declared identity classes.
The next subsection shows that neither sibling alignment
nor a single divergence witness establishes it.

\subsection{Neither a Witness nor an Alignment Diagnoses}

\begin{proposition}[Substitution Implies Sibling Alignment and a Witness]
  \label{oi.prop.substitution-witness}
  Suppose \(d\) exhibits regime substitution on \(R\)
  and that
  \[
    \Dsib
    \;\neq\;
    \sim_{\tau},
    \qquad\text{equivalently}\qquad
    \sim_{\tau}\;\not\subseteq\;\sim_{\tau'}.
  \]
  Then \(d\) is not faithful to \(\tau\) on \(R\),
  it exhibits sibling-aligned divergence,
  and any pair \((r,s)\) with \(r\sim_{\tau}s\) and \(r\not\sim_{\tau'}s\)
  is a divergence witness for \((d,\tau)\)
  that separates under the sibling.
\end{proposition}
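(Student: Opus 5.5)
The argument is a direct unfolding of definitions. I will take regime substitution in the relational form \(\Ker\;=\;\sim_{\tau'}\) on \(R\) from Definition~\ref{oi.def.regime-substitution}, and intersect both sides with \(\sim_{\tau}\). That gives \(\Dop=\;\Ker\cap\sim_{\tau}\;=\;\sim_{\tau'}\cap\sim_{\tau}\;=\Dsib\) by Definition~\ref{oi.def.restricted}. I will also check the stated equivalence of the hypothesis. Since \(\Dsib\subseteq\;\sim_{\tau}\) always holds, \(\Dsib\neq\;\sim_{\tau}\) holds exactly when some pair is \(\sim_{\tau}\)-related but not \(\sim_{\tau'}\)-related, that is, when \(\sim_{\tau}\;\not\subseteq\;\sim_{\tau'}\).

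Next I will obtain the witness. Take any pair \((r,s)\) with \(r\sim_{\tau}s\) and \(r\not\sim_{\tau'}s\); such a pair exists by the hypothesis just unpacked. By substitution, \(r\not\sim_{\tau'}s\) gives \(r\not\Ker s\), so \(\IT_d(r)\neq\IT_d(s)\). Together with \(r\sim_{\tau}s\), this makes \((r,s)\) a divergence witness in the sense of Definition~\ref{oi.def.divergence}, and by construction it separates under the sibling. Proposition~\ref{oi.prop.refutes} then gives that \(d\) is not faithful to \(\tau\) on \(R\). It also gives \(d(r)\neq d(s)\), though the statement does not require this.

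Finally, I will classify the divergence. The hypotheses of Definition~\ref{oi.def.classification} are now all met: \(d\) is not faithful, \(\Sib(\tau)=\tau'\neq\varnothing\) because substitution requires it, and \(\Dsib\neq\;\sim_{\tau}\). Since \(\Dop=\Dsib\), case (a) holds, so the divergence is sibling-aligned, and exclusivity of the four cases rules out the others.

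I expect no real obstacle here. The only point needing care is to derive non-faithfulness before invoking the classification, because Definition~\ref{oi.def.classification} presupposes failure of faithfulness. The clause ``the regimes differ on \(R\)'' in the substitution definition is not needed, because the hypothesis \(\sim_{\tau}\not\subseteq\sim_{\tau'}\) is strictly stronger.
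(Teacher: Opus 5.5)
Your proposal is correct and matches the paper's proof step for step. Like the paper, you intersect \(\Ker\;=\;\sim_{\tau'}\) with \(\sim_{\tau}\) to get \(\Dop=\Dsib\), extract a pair from \(\Dsib\neq\;\sim_{\tau}\), use substitution to separate its signatures, and then conclude non-faithfulness and sibling alignment; your extra checks (the equivalence of the two forms of the hypothesis, and deriving non-faithfulness before invoking Definition~\ref{oi.def.classification}) are sound details the paper leaves implicit.
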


\begin{proof}
  Substitution gives \(\Ker\;=\;\sim_{\tau'}\),
  so intersecting both sides with \(\sim_{\tau}\) gives \(\Dop=\Dsib\).
  The hypothesis \(\Dsib\neq\;\sim_{\tau}\)
  supplies a pair with \(r\sim_{\tau}s\) and \(r\not\sim_{\tau'}s\),
  which satisfies \(\IT_d(r)\neq\IT_d(s)\)
  because \(\Ker\;=\;\sim_{\tau'}\).
  That pair is a divergence witness,
  so faithfulness fails,
  and \(\Dop=\Dsib\) with \(\Dsib\neq\;\sim_{\tau}\)
  is sibling-aligned divergence.
\end{proof}

\begin{remark}[Regime Substitution Need Not Produce a Splitting Fault]
  \label{oi.rem.merging-substitution}
  The hypothesis of Proposition~\ref{oi.prop.substitution-witness}
  is not implied by the requirement that the regimes differ on \(R\),
  and dropping it makes the proposition false.

  Suppose the examined history contains only a content amendment
  preserving section structure,
  which is \BRK{} under \RULEC{} and \PRS{} under \RULES{}.
  Then \(\sim_{\RULEC}\) separates the two records
  and \(\sim_{\RULES}\) merges them,
  so the regimes differ on \(R\)
  while \(\sim_{\RULEC}\;\subseteq\;\sim_{\RULES}\).
  A surface with \(\pi_d=\pi_{\RULES}\)
  exhibits regime substitution:
  it globally implements the undeclared sibling basis.
  It nevertheless splits no declared co-reference class,
  so by Remark~\ref{oi.rem.asymmetry}
  it is faithful and the audit returns \PASS{}.

  The surface is running an identity regime
  different from the declaration,
  and it is running it by merging referents the declaration separates,
  which is the direction this audit does not police.
  The observation is a limit on the audit rather than a defect in the definitions.
  A splitting fault is refutable by one pair
  because no later mechanism can retract a distinction already made.
  A merging fault is not,
  because another mechanism in \(A_B\) may carry the distinction,
  and Definition~\ref{oi.def.store} is where a store-level
  merging analysis would have to begin.
\end{remark}

\begin{proposition}[Sibling Alignment Does Not Imply Substitution]
  \label{oi.prop.aligned-not-substitution}
  There is a carrier kind, a declared regime \(\tau\) with sibling \(\tau'\),
  a family-labeled history over the imported transformation basis,
  and a surface \(d\) exhibiting sibling-aligned divergence
  and not exhibiting regime substitution.
\end{proposition}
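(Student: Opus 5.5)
The plan is to build an explicit finite counterexample on the locus/object axis, reusing the fork of Example~\ref{oi.ex.loc-obj-fork} and adding one record that the declaration separates from the fork class. The surface will then treat that extra record differently from how the sibling does. Sibling alignment only constrains pairs inside declared classes (Remark~\ref{oi.rem.local}), so any disagreement on a \(\tau\)-separated pair breaks global equality without touching the local comparison. The target is therefore \(\Dop=\Dsib\neq\;\sim_{\tau}\) together with \(\Ker\;\neq\;\sim_{\tau'}\) on \(R\).

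Concretely, I take \(\tau=\LOC\) and \(\tau'=\OBJ\). Let \(R=\{r_0,r_1,r_2,r_3\}\) with the history \(H=\langle(\BF{},r_0,r_1),(\BF{},r_0,r_2)\rangle\), so \(r_3\) is an isolated record at a different site \(\ell'\). It carries a sensor whose serial coincides with that of \(r_1\), say \(o_1\).

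\begin{itemize}
  \item Since \(r_3\) is incident to no edge, \(\pi_{\LOC}=\{\{r_0,r_1,r_2\},\{r_3\}\}\) and \(\pi_{\OBJ}=\{\{r_0\},\{r_1\},\{r_2\},\{r_3\}\}\).
  \item Let \(d\) be the sensor-serial field with a referent-identifier use that assigns one identifier per serial. Then \(\pi_d=\{\{r_0\},\{r_1,r_3\},\{r_2\}\}\).
  \item Restricting to \(\sim_{\LOC}\) gives \(\Dop\) equal to the identity relation on each declared class, and \(\Dsib\) is the same. So \(\Dop=\Dsib\).
  \item \(\Dsib\neq\;\sim_{\LOC}\), since \(r_0\sim_{\LOC}r_1\) but \(r_0\not\sim_{\OBJ}r_1\).
  \item \(d\) is unfaithful, with \((r_0,r_1)\) as a divergence witness, so Definition~\ref{oi.def.classification} applies and yields sibling-aligned divergence.
  \item \(\pi_d\neq\pi_{\OBJ}\), because \(r_1\Ker r_3\) while \(r_1\not\sim_{\OBJ}r_3\). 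So \(d\) does not exhibit regime substitution.
\end{itemize}

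The remaining steps are bookkeeping. I check surface control (Definition~\ref{oi.def.surface}(c)): the use factors through the serial value by construction. I check preconditions (P1)--(P3): there is one carrier kind, and \(\BF{}\) is not \NAcls{} under either regime, per the imported classification cited in Example~\ref{oi.ex.loc-obj-fork}. I also confirm the clause ``the regimes differ on \(R\)'', which the fork already witnesses.

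The main obstacle is plausibility of the physical setup rather than the algebra. Under \OBJ{}, two records carrying the same sensor serial might be expected to co-refer. If the sibling relation is only the equivalence generated by non-breaking history edges (Definition~\ref{oi.def.declared}), then \(r_1\) and \(r_3\), having no connecting transformation, are \(\OBJ\)-separated regardless of serial. That is exactly what the definitions give, and I would state it explicitly.

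If this feels contrived, the fallback is to keep \(r_3\) with a fresh serial \(o_3\) and have the surface merge \(r_3\) with \(r_2\) via a different registered use. For example, an applicability-selection use keyed to a field shared by \(r_2\) and \(r_3\), escalated to a joint mechanism. Either variant only needs one \(\tau\)-separated pair on which \(\Ker\) and \(\sim_{\tau'}\) disagree.
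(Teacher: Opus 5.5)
Your construction is correct. With \(R=\{r_0,r_1,r_2,r_3\}\) and only the two fork edges, you get \(\Dop=\Dsib=\mathrm{diag}(R)\neq\;\sim_{\LOC}\), and \((r_0,r_1)\) is a divergence witness. Also \(r_1\Ker r_3\) while \(r_1\not\sim_{\OBJ}r_3\), so \(\pi_d\neq\pi_{\OBJ}\).

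The paper uses the same underlying idea: it hides the disagreement with the sibling on pairs the declaration separates, which \(\Dop\) and \(\Dsib\) cannot see (Remark~\ref{oi.rem.local}). It instantiates that idea differently. It works on the rule axis, with a history mixing \textsf{refine-structure}, \textsf{revise-wording} and \textsf{amend-content}, so the declared and sibling partitions cross. The common restricted relation is then non-diagonal, and the surface disagrees with \(\RULES\) in both directions on \(\tau\)-separated pairs. It separates \(r_0\) from \(r_3\), which \(\RULES\) merges, and it merges \(r_1\) with \(r_3\), which \(\RULES\) separates.

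Your version buys a sharper contrast. It is a one-record perturbation of Example~\ref{oi.ex.loc-obj-fork}, which exhibits substitution, so it shows that a single isolated \(\tau\)-separated record can destroy substitution while leaving the classification untouched. It also shows that crossing siblings are not needed, since here \(\sim_{\OBJ}\subseteq\;\sim_{\LOC}\) on \(R\).

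The paper's version is less degenerate. Because it specifies \(d\) by bare values rather than a physical serial, it never raises your plausibility question. You can do the same by setting \(d(r_3)=d(r_1)\) abstractly, though your appeal to Definition~\ref{oi.def.declared} already settles that question.

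Drop the fallback, because as described it fails. With fresh serials \(o_0,\dots,o_3\), the per-serial identifier use already separates \(r_2\) from \(r_3\). Since \(\IT_d\) concatenates all uses of a surface (Definition~\ref{oi.def.signature}), adding an applicability-selection use or joining the serial field with another field can only refine \(\Ker\). It can never merge a pair that some retained use separates. You would simply recover \(\pi_d=\pi_{\OBJ}\), which is regime substitution, the opposite of what you need.
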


\begin{proof}
  Take rules, \(\tau=\RULEC\), \(\tau'=\RULES\),
  \(R=\{r_0,r_1,r_2,r_3\}\), and
  \[
    H
    =
    \bigl\langle
    (\textsf{refine-structure},\,r_0,\,r_1),\;
    (\textsf{revise-wording},\,r_0,\,r_2),\;
    (\textsf{amend-content},\,r_0,\,r_3)
    \bigr\rangle.
  \]
  Structural refinement preserves content and changes structure,
  so it is \PRS{} under \RULEC{} and \BRK{} under \RULES{}.
  Content-preserving rewording changes neither,
  so it is \PRS{} under both.
  A content amendment preserving section structure
  changes content and leaves structure fixed,
  so it is \BRK{} under \RULEC{} and \PRS{} under \RULES{}
  by Remark~\ref{oi.rem.norefine}.
  Therefore
  \[
    \pi_{\RULEC}
    =
    \bigl\{\{r_0,r_1,r_2\},\{r_3\}\bigr\},
    \qquad
    \pi_{\RULES}
    =
    \bigl\{\{r_0,r_2,r_3\},\{r_1\}\bigr\},
  \]
  and
  \[
    \Dsib
    =
    \mathrm{diag}(R)\cup\bigl\{(r_0,r_2),(r_2,r_0)\bigr\}.
  \]
  Let \(d\) be a registered mechanism with
  \[
    d(r_0)=d(r_2)=0,
    \qquad
    d(r_1)=d(r_3)=1,
  \]
  and a single use of referent-identifier assignment
  given by \(\tilde{u}(0)=\mathsf{a}\) and \(\tilde{u}(1)=\mathsf{b}\)
  for distinct referent identifiers \(\mathsf{a}\neq\mathsf{b}\).
  Surface control holds by construction,
  \(\IT_d=\tilde{u}\circ d\),
  and
  \[
    \pi_d
    =
    \bigl\{\{r_0,r_2\},\{r_1,r_3\}\bigr\}.
  \]
  Then
  \[
    \Dop
    \;=\;
    \Ker\;\cap\;\sim_{\RULEC}
    \;=\;
    \Dsib,
  \]
  so \(d\) exhibits sibling-aligned divergence.
  But \(\pi_d\neq\pi_{\RULES}\),
  since \(d\) separates \(r_0\) from \(r_3\)
  while \RULES{} merges them,
  and \(d\) merges \(r_1\) with \(r_3\)
  while \RULES{} separates them.
  The disagreement lies entirely on pairs that \RULEC{} separates,
  which \(\Dop\) and \(\Dsib\) do not see.
\end{proof}

\begin{proposition}[A Sibling-Separating Witness Does Not Establish Alignment]
  \label{oi.prop.witness-not-alignment}
  There is a carrier kind, a declared regime \(\tau\) with sibling \(\tau'\),
  a family-labeled history over the imported transformation basis,
  and a surface \(d\)
  such that a divergence witness separating under \(\tau'\) exists,
  and \(d\) exhibits sub-sibling divergence rather than sibling-aligned divergence.
\end{proposition}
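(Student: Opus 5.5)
The proof will be an explicit construction, reusing the rule setting from the proof of Proposition~\ref{oi.prop.aligned-not-substitution}: rules as the carrier kind, \(\tau=\RULEC\) and \(\tau'=\RULES\). The aim is a declared class that the sibling splits in one place and the surface splits in that place and also in another. I take
\[
  R=\{r_0,r_1,r_2\},
  \qquad
  H=\bigl\langle(\textsf{refine-structure},\,r_0,\,r_1),\;(\textsf{revise-wording},\,r_0,\,r_2)\bigr\rangle .
\]
By the classifications already used in that proof, both edges are \PRS{} under \RULEC{}. This gives
\[
  \pi_{\RULEC}=\bigl\{\{r_0,r_1,r_2\}\bigr\}.
\]
Under \RULES{}, structural refinement is \BRK{} and rewording is \PRS{}, so
\[
  \pi_{\RULES}=\bigl\{\{r_0,r_2\},\{r_1\}\bigr\}.
\]
Hence \(\Dsib=\mathrm{diag}(R)\cup\{(r_0,r_2),(r_2,r_0)\}\), which is a proper subrelation of \(\sim_{\tau}\). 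The sibling therefore splits a declared class, and Definition~\ref{oi.def.classification} applies.

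For the surface, I model the version field of the worked example. Let \(d\) be a registered version field incremented on every textual edit, so \(d(r_0)=0\), \(d(r_1)=1\), \(d(r_2)=2\). Give it a single referent-identifier use \(\tilde u\) that is injective on \(\{0,1,2\}\). Surface control holds by construction, and \(\pi_d\) is the discrete partition. Then \(\Dop=\mathrm{diag}(R)\).

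Three checks remain.
\begin{enumerate}[label=(\roman*)]
  \item The pair \((r_0,r_1)\) satisfies \(r_0\sim_{\tau}r_1\) and \(\IT_d(r_0)\neq\IT_d(r_1)\), so it is a divergence witness by Definition~\ref{oi.def.divergence}. It also has \(r_0\not\sim_{\tau'}r_1\), so it separates under the sibling, as the statement requires. By Proposition~\ref{oi.prop.refutes}, \(d\) is not faithful.
  \item Since \(\Dop=\mathrm{diag}(R)\subsetneq\Dsib\), the case is sub-sibling divergence, clause (b) of Definition~\ref{oi.def.classification}. The four cases are mutually exclusive, so it is not sibling-aligned.
  \item The pair \((r_0,r_2)\) is what separates the two cases. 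It is also a divergence witness, but one the sibling merges.
\end{enumerate}
The point of the construction is that the witness \((r_0,r_1)\) looks the same as the witness that Proposition~\ref{oi.prop.substitution-witness} guarantees under substitution. Observing it alone therefore cannot distinguish this case from alignment.

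The step that needs the most care is justifying the classifications of the two families, in particular that content-preserving rewording is \PRS{} under \RULES{}. That claim is what places \((r_0,r_2)\) in \(\Dsib\). I would cite the same justification used in the proof of Proposition~\ref{oi.prop.aligned-not-substitution}: rewording changes neither content nor structure. If that classification were contested, I would substitute any family that is non-breaking under both regimes; a pair merged by both is all the argument needs. The remaining checks are finite enumerations over three records.
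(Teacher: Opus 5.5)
Your proposal is correct and matches the paper's proof. It uses the same rules carrier with \(\tau=\RULEC\) and \(\tau'=\RULES\), the same two-edge history on \(\{r_0,r_1,r_2\}\), the same version-counter surface with an injective referent-identifier use, and the same pairs: \((r_0,r_1)\) as the sibling-separating witness and \((r_0,r_2)\in\Dsib\setminus\Dop\) giving \(\Dop\subsetneq\Dsib\). Your explicit check that \(\Dsib\neq\;\sim_{\tau}\), so that the classification hypotheses hold, is a point the paper leaves implicit.
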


\begin{proof}
  Take the carrier kind of rules,
  declared regime \(\tau=\RULEC\), sibling \(\tau'=\RULES\),
  and the record domain \(R=\{r_0,r_1,r_2\}\)
  with history
  \[
    H
    =
    \bigl\langle
    (\textsf{refine-structure},\,r_0,\,r_1),\;
    (\textsf{revise-wording},\,r_0,\,r_2)
    \bigr\rangle.
  \]
  Structural refinement preserves rule content and changes section structure,
  so it is \PRS{} under \RULEC{} and \BRK{} under \RULES{}.
  Content-preserving rewording changes neither content nor structure,
  so it is \PRS{} under both.
  Hence
  \[
    E_{\RULEC}=\{(r_0,r_1),(r_0,r_2)\},
    \qquad
    E_{\RULES}=\{(r_0,r_2)\},
  \]
  giving one declared block \(\{r_0,r_1,r_2\}\)
  and sibling blocks \(\{r_0,r_2\}\) and \(\{r_1\}\).

  Let \(d\) be a registered field holding a version counter
  that increments whenever the stored rule text changes,
  so \(d\) takes three distinct values on \(R\),
  and let its single use be referent-identifier assignment,
  returning a distinct referent identifier for each value.
  Surface control holds,
  and \(\Ker\) is the identity relation on \(R\).

  The pair \((r_0,r_1)\) satisfies \(r_0\sim_{\RULEC}r_1\)
  and \(\IT_d(r_0)\neq\IT_d(r_1)\),
  so it is a divergence witness,
  and \(r_0\not\sim_{\RULES}r_1\),
  so it separates under the sibling.

  The pair \((r_0,r_2)\) satisfies \(r_0\sim_{\RULEC}r_2\)
  and \(r_0\sim_{\RULES}r_2\)
  while \(\IT_d(r_0)\neq\IT_d(r_2)\),
  so \((r_0,r_2)\in\Dsib\setminus\Dop\).
  Hence \(\Dop\subsetneq\Dsib\),
  which is sub-sibling divergence.
\end{proof}

The construction uses only a version counter
that increments on any change to stored text.
On the examined records it splits declared classes
strictly more finely than either imported basis.
The pair \((r_0,r_1)\) is a valid divergence witness,
but it does not establish the classification it appears to invite.
Section~\ref{oi.sec.worked} shows why the classification matters:
the repair suggested by sibling alignment
would leave this fault in place.

\begin{remark}[Non-vacuity of the Classification]
  \label{oi.rem.nonvacuity}
  Cases (a) and (b) of Definition~\ref{oi.def.classification}
  are realized by Example~\ref{oi.ex.loc-obj-fork}
  and Proposition~\ref{oi.prop.witness-not-alignment}.
  Cases (c) and (d) also occur under the imported interface,
  so the four-way classification is not partly empty.

  \emph{Super-sibling (c).}
  Take \(\tau=\RULEC\), \(\tau'=\RULES\),
  \(R=\{r_0,r_1,r_2\}\), and
  \[
    H
    =
    \bigl\langle
    (\textsf{refine-structure},\,r_0,\,r_1),\;
    (\textsf{refine-structure},\,r_0,\,r_2)
    \bigr\rangle.
  \]
  Both edges are \PRS{} under \RULEC{} and \BRK{} under \RULES{},
  so
  \[
    \pi_{\RULEC}=\bigl\{\{r_0,r_1,r_2\}\bigr\},
    \qquad
    \pi_{\RULES}=\bigl\{\{r_0\},\{r_1\},\{r_2\}\bigr\},
  \]
  giving \(\Dsib=\mathrm{diag}(R)\) on the declared class.
  Let \(d\) be a registered field with
  \(d(r_0)=d(r_1)=0\) and \(d(r_2)=1\),
  and a single use of referent-identifier assignment
  given by \(\tilde{u}(0)=\mathsf{a}\) and \(\tilde{u}(1)=\mathsf{b}\)
  for distinct referent identifiers \(\mathsf{a}\neq\mathsf{b}\).
  Then \(\Ker\) has blocks \(\{r_0,r_1\}\) and \(\{r_2\}\), so
  \[
    \Dop
    =
    \mathrm{diag}(R)\cup\bigl\{(r_0,r_1),(r_1,r_0)\bigr\}
    \qquad\text{and}\qquad
    \Dsib\subsetneq\Dop\subsetneq\;\sim_{\RULEC}.
  \]
  The surface makes one of the two structural distinctions
  the sibling makes and omits the other.

  \emph{Sibling-incomparable (d).}
  Take the same \(\tau,\tau'\),
  with \(R=\{r_0,r_1,r_2,r_3\}\) and
  \[
    H
    =
    \bigl\langle
    (\textsf{revise-wording},\,r_0,\,r_1),\;
    (\textsf{refine-structure},\,r_0,\,r_2),\;
    (\textsf{revise-wording},\,r_2,\,r_3)
    \bigr\rangle.
  \]
  All three edges are \PRS{} under \RULEC{},
  so \(\pi_{\RULEC}=\bigl\{\{r_0,r_1,r_2,r_3\}\bigr\}\).
  Under \RULES{} the two rewordings are \PRS{}
  and the refinement is \BRK{},
  so \(\pi_{\RULES}=\bigl\{\{r_0,r_1\},\{r_2,r_3\}\bigr\}\),
  and \(\Dsib\) merges \((r_0,r_1)\) and \((r_2,r_3)\).
  Let \(d\) be a registered field with
  \(d(r_0)=d(r_2)=0\) and \(d(r_1)=d(r_3)=1\),
  and a single use of referent-identifier assignment
  given by \(\tilde{u}(0)=\mathsf{a}\) and \(\tilde{u}(1)=\mathsf{b}\)
  for distinct referent identifiers \(\mathsf{a}\neq\mathsf{b}\).
  So \(\Ker\) has blocks \(\{r_0,r_2\}\) and \(\{r_1,r_3\}\).
  Then
  \[
    (r_0,r_1)\in\Dsib\setminus\Dop
    \qquad\text{and}\qquad
    (r_0,r_2)\in\Dop\setminus\Dsib,
  \]
  so \(\Dop\) and \(\Dsib\) are incomparable.
  In both constructions surface control holds through the single use,
  and \(d\) splits the declared class,
  so the audit returns \FAIL{}.
\end{remark}

Propositions~\ref{oi.prop.aligned-not-substitution}
and~\ref{oi.prop.witness-not-alignment}
bound the diagnostic strength of the audit from both sides.
A witness does not establish a classification, and
a classification does not establish a basis.
The audit establishes that the declaration and the implementation
disagree and, when the sibling comparison is informative,
where their disagreement lies in the lattice.

\section{Decidability, Verdict, and Non-monotonicity}
\label{oi.sec.procedure}

\begin{assumption}[Finite Audit Artifact]
  \label{oi.asm.finite}
  For each examined surface \(d\):
  \begin{enumerate}[label=(\alph*)]
    \item the preconditions of Definition~\ref{oi.def.precond} are discharged;
    \item \(R\) and the relevant transformation history \(H\) are finite;
    \item every history endpoint resolves to a record in \(R\);
    \item each family \(f_i\) is known and \(\Class_{\rho}(f_i)\) is decidable
          for each regime assigned to the carrier kind;
    \item \(d\in A_B\), so \(d\in\mathcal{D}_B\)
          and \(d(r)\) and \(\IT_d(r)\) are computable for every \(r\in R\);
    \item surface control is established for \(d\) over \(\mathcal{D}_B\).
  \end{enumerate}
  Store-level statements range over the examined family \(A_B\).
\end{assumption}

\begin{theorem}[Finite Decidability]
  \label{oi.thm.decidable}
  Under Assumption~\ref{oi.asm.finite},
  the declared partition \(\pi_{\tau}\),
  the operational partition \(\pi_d\),
  faithfulness of \(d\) to \(\tau\) on \(R\),
  existence of a divergence witness,
  and the divergence classification of Definition~\ref{oi.def.classification}
  are all computable.

  Let
  \[
    n=|R|,
    \qquad
    h=|H|,
    \qquad
    m=|U_d|.
  \]
  Constructing the declared and sibling partitions requires at most \(2h\)
  calls to the imported classification functions
  and \(O((n+h)\alpha(n))\) union-find time,
  where \(\alpha\) is the inverse-Ackermann function.
  Constructing the operational partition requires \(mn\)
  identified-use evaluations and grouping the resulting signatures.
  Once the partitions are built,
  faithfulness, witness extraction, and divergence classification
  take at most \(O(n^2)\) time by pair enumeration
  and expected \(O(n)\) time using block labels and hashing.
\end{theorem}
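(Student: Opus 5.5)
The proof is constructive: give an explicit algorithm for each object named in Theorem~\ref{oi.thm.decidable}, check that each step terminates under Assumption~\ref{oi.asm.finite}, and count the work. The steps below are taken in order.

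\emph{Declared and sibling partitions.} Initialize a union-find structure on \(R\) with \(n\) singletons. For each of the \(h\) entries \((f_i,x_i,y_i)\) of \(H\), evaluate \(\Class_\tau(f_i)\). If \(\Sib(\tau)=\tau'\neq\varnothing\), also evaluate \(\Class_{\tau'}(f_i)\). That is at most \(2h\) calls, each decidable by Assumption~\ref{oi.asm.finite}(d). Clause~(c) guarantees that the endpoints resolve to elements of \(R\).
\begin{itemize}
  \item Whenever a value lies in \(\{\PRS,\NEU\}\), perform a union of \(x_i,y_i\) in the corresponding structure.
  \item By Definition~\ref{oi.def.declared}, \(\sim_\tau\) is the equivalence relation generated by \(E_\tau\). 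This is exactly the connectivity relation of the graph \((R,E_\tau)\), so the final union-find components are the blocks of \(\pi_\tau\). The same argument gives \(\pi_{\tau'}\).
  \item The standard bound of \(n\) makes plus \(O(h)\) unions and finds yields \(O((n+h)\alpha(n))\).
  \item A final find pass assigns each record a canonical block label \(\beta_\tau(r)\), and likewise \(\beta_{\tau'}(r)\).
\end{itemize}

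\emph{Operational partition.} By clause~(e) each \(u_j(r)\) is computable with a finite outcome. Compute \(\IT_d(r)\) for all \(r\in R\), which takes \(mn\) use evaluations. Then group equal signatures, by sorting or by hashing the tuples. This yields the blocks of \(\pi_d\) (Definition~\ref{oi.def.signature}) together with a label \(\beta_d(r)\).

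\emph{Faithfulness, witnesses and classification.}
\begin{itemize}
  \item \emph{Pair enumeration.} For all \(n^2\) pairs, test whether \(\beta_\tau(r)=\beta_\tau(s)\) and \(\beta_d(r)\neq\beta_d(s)\). By Proposition~\ref{oi.prop.refutes}, \(d\) is faithful iff no pair passes, and any pair that passes is a witness.
  \item \emph{Restricted relations.} During the same enumeration, record membership in \(\Dop\) (same \(\beta_\tau\) and same \(\beta_d\)) and in \(\Dsib\) (same \(\beta_\tau\) and same \(\beta_{\tau'}\)).
  \item \emph{Classification.} Deciding \(\Dsib\neq\sim_\tau\), \(\Dop\subseteq\Dsib\) and \(\Dsib\subseteq\Dop\) then selects among the unpositioned case and cases (a)--(d) of Definition~\ref{oi.def.classification}. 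The total cost is \(O(n^2)\).
\end{itemize}

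\emph{Linear-time refinement.} This is the only step that needs care. The observation is that for equivalence relations given by labelings, \(\sim_\tau\subseteq\Ker\) holds iff the map \(\beta_\tau(r)\mapsto\beta_d(r)\) is well defined, that is, single-valued. Insert each \(r\) into a hash table keyed by \(\beta_\tau(r)\):
\begin{itemize}
  \item on first insertion, store \(\beta_d(r)\) and the record \(r\);
  \item on a later conflict, emit the stored record and \(r\) as a witness.
\end{itemize}
This runs in expected \(O(n)\).

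For classification, observe that \(\Dop\) is the kernel of \(r\mapsto(\beta_\tau(r),\beta_d(r))\), and \(\Dsib\) is the kernel of \(r\mapsto(\beta_\tau(r),\beta_{\tau'}(r))\). Containment between kernels of two labelings \(\lambda,\mu\) is tested the same way, by checking that \(\lambda(r)\mapsto\mu(r)\) is single-valued. So three such hash passes decide all the required containments, and the whole step takes expected \(O(n)\). The remaining obligation is to verify this kernel-containment lemma, which is elementary.
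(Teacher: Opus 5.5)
Your proposal is correct and follows the paper's proof essentially step for step: union-find closure of the non-breaking edges with at most \(2h\) classification calls, signature grouping for \(\pi_d\), pair enumeration for the \(O(n^2)\) bound, and block labels with hashing for the expected \(O(n)\) bound. Your kernel-containment formulation (testing that \(\lambda(r)\mapsto\mu(r)\) is single-valued) is a uniform restatement of the paper's check that each block identified by one pair of labels carries a single label of the other kind. The only thing to add is the explicit \(\Sib(\tau)=\varnothing\) branch, which returns unpositioned without computing \(\Dsib\).
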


\begin{proof}
  The checker computes \(E_{\tau}\) and, when \(\Sib(\tau)\neq\varnothing\),
  \(E_{\tau'}\), from the finite history,
  using decidability of the classification functions.
  It forms \(\sim_{\tau}\) and \(\sim_{\tau'}\)
  as equivalence closures by union-find over the finite edge sets,
  giving \(\pi_{\tau}\) and the sibling partition
  in \(O((n+h)\alpha(n))\) union-find time.

  It computes \(\IT_d(r)\) for every \(r\in R\)
  and groups records by signature, giving \(\pi_d\).
  This requires \(mn\) identified-use evaluations.
  Apart from the costs of evaluating and representing individual outcomes,
  constructing and hashing the signatures takes expected \(O(mn)\) time.

  Faithfulness is the containment
  \(\sim_{\tau}\subseteq\;\Ker\).
  It is decidable by enumerating the pairs
  in each block of \(\pi_{\tau}\)
  and comparing their signatures.
  Any pair at which the containment fails
  is a divergence witness.

  Suppose faithfulness fails.
  When \(\Sib(\tau)=\varnothing\),
  the divergence is computably classified as unpositioned.
  When a sibling exists, the checker computes \(\Dsib\);
  if \(\Dsib=\;\sim_{\tau}\),
  the divergence is again classified as unpositioned.
  Otherwise, the checker computes $\Dop$
  and performs the containment tests
  \[
    \Dop\subseteq\Dsib
    \qquad\text{and}\qquad
    \Dsib\subseteq\Dop.
  \]
  The four possible outcomes are exactly
  the four sibling-relative cases of
  Definition~\ref{oi.def.classification}.

  Pair enumeration gives a deterministic \(O(n^2)\)
  upper bound for this comparison stage.
  For the near-linear implementation,
  assign each record its block labels in
  \(\pi_{\tau}\), \(\pi_d\), and, when present, \(\pi_{\tau'}\).
  Faithfulness holds exactly when every declared block
  has a single operational label;
  retaining one representative record for each declared block
  produces a witness when a second label is encountered.
  The containment \(\Dop\subseteq\Dsib\) holds exactly when
  every block identified by its declared and operational labels
  has a single sibling label.
  The reverse containment holds exactly when
  every block identified by its declared and sibling labels
  has a single operational label.
  Hash tables perform these scans in expected \(O(n)\) time.
\end{proof}

\subsection{Three-Valued Verdict}

\begin{definition}[Audit Verdict]
  \label{oi.def.verdict}
  The audit of a surface \(d\) against declared regime \(\tau\) on \(R\) returns:
  \begin{itemize}
    \item \PASS{} when the preconditions and surface control are discharged
          and \(d\) is faithful to \(\tau\) on \(R\);
    \item \FAIL{} when they are discharged and a divergence witness exists,
          reported with the classification of
          Definition~\ref{oi.def.classification}
          and with a witness pair;
    \item \INDET{} when a precondition is not discharged,
          surface control cannot be established over \(\mathcal{D}_B\),
          a family label is unavailable or unverified,
          or an identity-relevant outcome cannot be computed
          for some record in \(R\).
  \end{itemize}
\end{definition}

\begin{remark}[Indeterminate Is Not Conformance]
  \label{oi.rem.indet}
  An \INDET{} verdict reports that the audit could not decide.
  It has the standing of an unexamined surface
  and does not support a claim that the surface is faithful.
  Privacy, security, trade-secret, and access constraints
  frequently produce \INDET{} rather than \PASS{},
  and reporting the two under one heading
  would let an unexaminable system present as a conformant one.
\end{remark}

\subsection{Passing Is Not Monotone}

The declared partition is built by transitive closure.
That fact has a consequence for what a passing verdict can be indexed to.

\begin{proposition}[Non-monotonicity of \PASS{} under History Extension]
  \label{oi.prop.nonmonotone}
  There exist a fixed record domain \(R\),
  histories \(H\preceq H^{\ast}\) over \(R\),
  and a surface \(d\)
  such that \(d\) is faithful to \(\tau\) on \((R,H)\)
  and a divergence witness for \((d,\tau)\) exists on \((R,H^{\ast})\).
  The record domain, the registry, the surface,
  the records of the witness,
  and their identity treatment are all unchanged.
\end{proposition}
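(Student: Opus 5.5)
The proof is an explicit construction, so the plan is to choose the smallest example in which adding one history edge joins two declared classes through a chain. I will use rules with \(\tau=\RULEC\) (the sibling plays no role) and take \(R=\{r_0,r_1,r_2\}\) with
\[
  H=\bigl\langle(\textsf{revise-wording},\,r_0,\,r_1)\bigr\rangle,
  \qquad
  H^{\ast}=\bigl\langle(\textsf{revise-wording},\,r_0,\,r_1),\;
  (\textsf{refine-structure},\,r_1,\,r_2)\bigr\rangle.
\]
Since \(H\) is a subsequence of \(H^{\ast}\), we have \(H\preceq H^{\ast}\). Both families are \PRS{} under \RULEC{}, as used in the proof of Proposition~\ref{oi.prop.witness-not-alignment}. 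Hence
\[
  \pi_{\RULEC}(H)=\bigl\{\{r_0,r_1\},\{r_2\}\bigr\},
  \qquad
  \pi_{\RULEC}(H^{\ast})=\bigl\{\{r_0,r_1,r_2\}\bigr\}.
\]
The new edge touches \(r_2\), but the witness will be the pair \((r_0,r_2)\). This pair is joined only through the transitive closure, which is the point the proposition is meant to show.

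Next comes the surface. Let \(d\) be a registered field with \(d(r_0)=d(r_1)=0\) and \(d(r_2)=1\). Give it a single referent-identifier use \(\tilde u(0)=\mathsf{a}\), \(\tilde u(1)=\mathsf{b}\) with \(\mathsf{a}\neq\mathsf{b}\), as in Remark~\ref{oi.rem.nonvacuity}. Surface control then holds by construction, and \(\pi_d=\{\{r_0,r_1\},\{r_2\}\}\). Neither \(d\) nor its uses refer to \(H\), so the registry, the surface and every \(\IT_d(r)\) are the same under both histories.

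With these choices the two faithfulness checks are direct. Under \(H\), the only nontrivial declared block is \(\{r_0,r_1\}\), and \(\IT_d\) is constant on it. Since \(\pi_\tau=\pi_d\), Definition~\ref{oi.def.faithful} holds and \(d\) is faithful. Under \(H^{\ast}\), we have \(r_0\sim_{\tau}r_2\) through \(r_1\), while \(\IT_d(r_0)=\mathsf{a}\neq\mathsf{b}=\IT_d(r_2)\). So \((r_0,r_2)\) is a divergence witness, and by Proposition~\ref{oi.prop.refutes} faithfulness fails.

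The main thing to check is the remaining preconditions, so that the example is a genuine \PASS{}-to-\FAIL{} transition and not an \INDET{}. For (P2), both families must be applicable to rules under every regime assigned to that kind. This follows from the classifications already used in Proposition~\ref{oi.prop.witness-not-alignment}. For Assumption~\ref{oi.asm.finite}(c), every endpoint of \(H^{\ast}\) lies in \(R\); this is why \(r_2\) is placed in the domain from the start, so that \(R\) is fixed. With those points checked, the example meets every clause of the statement.
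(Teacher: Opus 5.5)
Your construction is correct and is essentially the paper's argument: a fixed three-record domain, a fixed surface whose signatures do not depend on the history, and an extension by \(\tau\)-non-breaking edges whose equivalence closure joins two records the surface separates. The paper instead starts from the empty history, with an injective surface and abstract non-breaking families \(f,g\), so its initial pass is vacuous; your concrete \(\RULEC\) instance with the nontrivial initial block \(\{r_0,r_1\}\) makes the initial faithfulness non-vacuous, which is a harmless and slightly more informative variant.
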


\begin{proof}
  Let \(R=\{r,s,t\}\) and let \(H=\langle\rangle\) be the empty history,
  so \(E_{\tau}=\varnothing\),
  the relation \(\sim_{\tau}\) is the identity relation on \(R\),
  and \(\pi_{\tau}\) consists of three singleton blocks.
  Let \(d\) be a registered mechanism taking three distinct values on \(R\),
  with a referent-identifier use returning a distinct identifier for each,
  so \(\IT_d\) separates all three records.
  No pair \((x,y)\) with \(x\neq y\) satisfies \(x\sim_{\tau}y\),
  so no divergence witness exists
  and \(d\) is faithful on \((R,H)\).

  Extend the history alone to
  \[
    H^{\ast}
    =
    \bigl\langle
    (f,\,r,\,t),\;
    (g,\,t,\,s)
    \bigr\rangle,
  \]
  where \(f\) and \(g\) are families non-breaking under \(\tau\).
  The record domain is still \(R\),
  and both endpoints of both transformations lie in it.
  Now \((r,t)\) and \((t,s)\) lie in \(E_{\tau}\),
  and the equivalence closure gives \(r\sim_{\tau}s\),
  so \(\pi_{\tau}\) has collapsed to the single block \(\{r,s,t\}\).
  By surface control \(\IT_d\) is a function of \(d\),
  so the signatures are unchanged
  and \(\IT_d(r)\neq\IT_d(s)\) still holds.
  The pair \((r,s)\) is now a divergence witness on \((R,H^{\ast})\).
\end{proof}

\begin{corollary}[Indexing of a Passing Verdict]
  \label{oi.cor.indexing}
  For a fixed examined surface and identified-use sequence,
  a \PASS{} verdict is indexed to
  \[
    (\mathfrak{R},R,H)
  \]
  and not to \(R\) alone.
  The audit must be rerun when the transformation history is extended
  or the imported referential-regime interface is revised,
  even when the record domain, the registry, and the surface are unchanged.

  This index records the history and interface dependence only.
  A pass verdict is additionally relative to
  the disclosed registry \(D_B\),
  the evaluated surface family \(A_B\),
  and the identified uses \(U_d\) of each surface;
  Section~\ref{oi.sec.disclosure} states those dependences
  and the witnesses that refute them.
\end{corollary}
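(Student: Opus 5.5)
The corollary has two parts: the positive indexing claim, that a \PASS{} verdict is a function of \((\mathfrak{R},R,H)\) once the surface and \(U_d\) are fixed; and the negative claim, that it cannot be indexed to \(R\) alone or to \((R,H)\) without \(\mathfrak{R}\).

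For the positive part, I trace the inputs of the verdict in Definition~\ref{oi.def.verdict}. \PASS{} requires two things: the preconditions and surface control, and faithfulness, i.e.\ \(\sim_{\tau}\subseteq\;\Ker\).

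\begin{itemize}
\item By Definition~\ref{oi.def.declared}, \(\sim_{\tau}\) is generated by \(E_{\tau}\). By Definition~\ref{oi.def.nonbreaking}, \(E_{\tau}\) is computed from \(H\) and \(\Class_{\tau}\), which is a component of \(\mathfrak{R}\).
\item By Definition~\ref{oi.def.signature}, \(\Ker\) depends only on \(R\), \(d\), and \(U_d\), all of which are held fixed.
\end{itemize}

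So for fixed \(d\) and \(U_d\), the faithfulness test of Theorem~\ref{oi.thm.decidable} is determined by \((\mathfrak{R},R,H)\). Precondition (P2) and Assumption~\ref{oi.asm.finite}(c),(d) also refer only to \(H\), \(R\), and \(\mathfrak{R}\).

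For the negative part, I need two examples, each keeping \(R\), the registry, and the surface fixed while changing the verdict.

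\begin{itemize}
\item \emph{History dependence.} Proposition~\ref{oi.prop.nonmonotone} supplies this directly. It gives \(H\preceq H^{\ast}\) over the same \(R\) with an unchanged surface, where \PASS{} holds on \((R,H)\) and a witness exists on \((R,H^{\ast})\), so the verdict becomes \FAIL{}. Hence the verdict is not a function of \(R\) alone.
\item \emph{Interface dependence.} I fix \(R\) and \(H\) and revise only \(\Class_{\tau}\), as Remark~\ref{oi.rem.interface-revision} allows. Take the setup of Proposition~\ref{oi.prop.nonmonotone} with \(H^{\ast}\) and the same separating surface \(d\). Under an interface in which \(f\) and \(g\) are \BRK{} under \(\tau\), the relation \(\sim_{\tau}\) is the identity, so the audit passes. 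Under the original interface, in which \(f\) and \(g\) are non-breaking, the audit fails.
\end{itemize}

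The main obstacle is the interface example: I must ensure the revised interface is still well-formed, so that the preconditions are discharged on both sides and the verdict is \PASS{} or \FAIL{}, not \INDET{}. Concretely, the revision must keep (P2), meaning no \NAcls{} label, and the decidability required by Assumption~\ref{oi.asm.finite}(d). Reclassifying between \PRS{} and \BRK{} preserves both conditions.

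The closing clause of the corollary, on relativity to \(D_B\), \(A_B\), and \(U_d\), needs no proof here; I would defer it to the references in Section~\ref{oi.sec.disclosure}.
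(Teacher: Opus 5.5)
Your proposal is correct and essentially follows the paper, which gives no separate proof. In both, the history dependence is exactly Proposition~\ref{oi.prop.nonmonotone}, the closing clause is deferred to Section~\ref{oi.sec.disclosure}, and your dependency trace only makes explicit what the paper leaves implicit. The one real difference is the interface dependence: the paper grounds it in the stipulation of Remark~\ref{oi.rem.interface-revision}, whereas your reclassification of \(f\) and \(g\) to \BRK{} (which keeps (P2) and decidability intact) exhibits an actual \FAIL{}-to-\PASS{} flip with \(R\), \(H^{\ast}\), and the surface fixed, so it shows the rerun is necessary rather than merely prudential.
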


Non-monotonicity distinguishes this audit from a test
whose passing result is stable under the arrival of new data.
With \(R\) and the operational signatures fixed,
the declared partition can only coarsen
as history accumulates.
That coarsening can create divergence witnesses
but cannot remove an existing one.
An operator who obtains a passing verdict
holds a statement about a history,
not a certificate about a store.

\section{Worked Legal-Alignment Example}
\label{oi.sec.worked}

Consider a legal-alignment system
recording a reporting rule used to evaluate
a regulatory filing submitted by an AI-enabled agent.
Advanced agents may act within legal and institutional processes,
and Kolt describes a trajectory in which they function as
subjects, consumers, producers, and enforcers of law~\citep{kolt2026superintelligence}.
Empirical survey of deployed agentic systems reports
uneven documentation and limited disclosure across the ecosystem~\citep{staufer2026agentindex},
which complicates later inspection
of which rules, filings, and authorities an accountability claim concerns.

The legal conclusion may remain contested.
The identity of the rule must remain stable enough
for reviewers to examine the same authority
while disputing whether the filing complied with it.

\subsection{Records and Declared Partition}

Let \(r_0\) be the original rule record.
Let \(r_1\) be a structural refinement:
it adds subsections and renumbers sections
while preserving the rule content.
Let \(r_2\) be an editorial revision of \(r_0\)
that rewords a sentence for clarity
while preserving both content and section structure.

The system declares content-fixed rule identity, \(\RULEC\).
Its imported sibling is structure-fixed rule identity, \(\RULES\).
By the classification used in
Proposition~\ref{oi.prop.witness-not-alignment},
the declared partition is the single block
\[
  \pi_{\RULEC}
  =
  \bigl\{\{r_0,r_1,r_2\}\bigr\},
\]
and the sibling partition is
\[
  \pi_{\RULES}
  =
  \bigl\{\{r_0,r_2\},\{r_1\}\bigr\}.
\]
All three records refer to one rule under the declaration.
The sibling basis would separate the structurally refined expression.

\subsection{The Surface and Its Operational Partition}

The registry \(D_B\) contains the field
\[
  d=\texttt{ruleTextVersion},
\]
which increments whenever the stored rule text changes.
All three records carry distinct values.

A value difference alone is harmless.
The field may identify a displayed expression,
support provenance,
or route a reviewer to the relevant text.
It becomes a surface only when identity-relevant uses factor through its value.

Suppose the application binds the referent identifier of a rule
to the value of \texttt{ruleTextVersion},
so that when the field changes the application
issues a new rule identifier,
refuses co-reference with the previous expression,
and resolves later citations to the new rule.
Those uses are functions of the field value,
so surface control holds.
The operational identity partition is
\[
  \pi_d
  =
  \bigl\{\{r_0\},\{r_1\},\{r_2\}\bigr\}.
\]

\subsection{Comparison}

The declared partition does not refine the operational one:
the block \(\{r_0,r_1,r_2\}\) is split.
The surface is not faithful, and the audit returns \FAIL{}.
The pair \((r_0,r_1)\) is a divergence witness,
and it separates under the sibling.

Restricting to the declared class gives
\[
  \Dop
  =
  \bigl\{(r_0,r_0),(r_1,r_1),(r_2,r_2)\bigr\},
  \qquad
  \Dsib
  =
  \Dop\cup\bigl\{(r_0,r_2),(r_2,r_0)\bigr\},
\]
so \(\Dop\subsetneq\Dsib\).
The classification is sub-sibling divergence.

The operational and sibling partitions differ:
\(\pi_d=\bigl\{\{r_0\},\{r_1\},\{r_2\}\bigr\}\)
separates \(r_0\) from \(r_2\),
which \(\pi_{\RULES}=\bigl\{\{r_0,r_2\},\{r_1\}\bigr\}\) merges.

Therefore \(\approx_d\neq{\sim_{\RULES}}\) on \(R\),
so the surface does not exhibit regime substitution
(Definition~\ref{oi.def.regime-substitution}),
and the audit reports no hidden identity regime.

\subsection{Why the Classification Changes the Repair}

Reported as sibling alignment, the finding invites one repair:
declare structure-fixed rule identity and document the basis.
Applied here, that repair leaves the fault in place.
The application would continue to break co-reference on \(r_2\),
an editorial revision that structure-fixed identity preserves.
The operator would have declared \RULES{} and still not be running it.

Reported as sub-sibling divergence, the finding says what is true.
On the examined records the operational partition is total,
so any change to the stored text yields a distinct rule referent.
The split is strictly finer than either imported basis,
and the imported inventory contains no basis that produces it.
The audit does not name the basis the application carries,
and the natural reading of \(\pi_d\) on \(R\)
is text-fixed rule identity,
which the operator can confirm from the field's update rule.

Three repairs follow, and the audit chooses among none of them.
The first preserves the declared regime:
the version field remains available for display and provenance,
the referent identifier is unbound from it,
and every content-preserving expression resolves to the same rule.
The surface values continue to differ,
the operational partition coarsens to the declared one,
and the surface becomes faithful.
The second declares structure-fixed identity
and repairs the implementation to match it,
which requires the field to stop firing on editorial revision.
The third adopts text-fixed identity as the intended basis,
which requires an argument that it is stability-critical
and a statement of its transformation classification,
and which places the declared basis outside the imported inventory.

The audit establishes that the declaration
and the examined implementation behavior diverge.
It also establishes which distinctions
the implementation carries on the examined records
and where those distinctions lie
relative to the declared and sibling partitions.
It does not name the identity basis responsible for them.

\section{Discovery and the Disclosure Boundary}
\label{oi.sec.disclosure}

The procedure evaluates the surfaces in the registry.
It does not discover every mechanism inside arbitrary source code, workflow configuration,
remote services, operational practice, or undocumented organizational convention.

Candidate mechanisms may be found through
source-code analysis,
schema and configuration inspection,
workflow-rule extraction,
runtime tracing,
differential testing,
policy-engine inspection,
legal discovery,
or post-incident review.
Those methods answer
\emph{where might identity treatment be controlled?}
The procedure of Section~\ref{oi.sec.procedure} answers
\emph{which rule of sameness does this mechanism carry,
  and is it the declared one?}

\begin{definition}[Registry-Completeness Claim]
  \label{oi.def.completeness}
  A \emph{registry-completeness claim} for \((B,D_B)\)
  asserts that every mechanism within \(B\)
  that affects identity-relevant treatment
  is represented in \(D_B\).
\end{definition}

The procedure can evaluate every surface in the examined family \(A_B\).
It cannot derive the truth of the completeness claim from \(D_B\) itself.

\begin{definition}[Omitted-Mechanism Witness]
  \label{oi.def.omitted}
  An \emph{omitted-mechanism witness} for \((B,D_B)\) consists of:
  \begin{enumerate}[label=(\alph*)]
    \item a mechanism \(d^{\ast}\) operating within \(B\) with \(d^{\ast}\notin D_B\);
    \item the declared regime \(\tau\) for the carrier kind of its record domain;
    \item records \(r,s\) within the examined system or preserved audit history;
    \item evidence that the identity-relevant uses of \(d^{\ast}\)
          factor through its value;
    \item evidence that \(r\sim_{\tau}s\); and
    \item evidence that \(\IT_{d^{\ast}}(r)\neq\IT_{d^{\ast}}(s)\).
  \end{enumerate}
\end{definition}

The omitted-mechanism witness requires divergence and does not require a sibling comparison.
An undisclosed mechanism that splits a declared co-reference class
is an identity-relevant mechanism absent from the disclosure,
whatever basis it turns out to carry.

\begin{proposition}[Later Witness Refutes Completeness]
  \label{oi.prop.refutes-completeness}
  An omitted-mechanism witness within \(B\)
  refutes the registry-completeness claim for \((B,D_B)\).
\end{proposition}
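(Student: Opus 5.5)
The plan is to argue directly from the definitions, with no appeal to the sibling machinery. The registry-completeness claim of Definition~\ref{oi.def.completeness} is a universally quantified statement: every mechanism within \(B\) that affects identity-relevant treatment is represented in \(D_B\). To refute it, it suffices to exhibit one mechanism within \(B\) that affects identity-relevant treatment and is not in \(D_B\). Clause (a) of Definition~\ref{oi.def.omitted} already supplies \(d^{\ast}\) operating within \(B\) with \(d^{\ast}\notin D_B\). The only work is to show that \(d^{\ast}\) \emph{affects identity-relevant treatment}.

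The first step uses clause (d). The identity-relevant uses of \(d^{\ast}\) factor through its value, so \(d^{\ast}\) satisfies surface control in the sense of Definition~\ref{oi.def.surface}(c). Its uses are therefore genuine identity-relevant outcomes of the types in Definition~\ref{oi.def.outcomes}, controlled by \(d^{\ast}\). The second step uses clause (f): \(\IT_{d^{\ast}}(r)\neq\IT_{d^{\ast}}(s)\), so some use \(u_j\) returns different identity-relevant outcomes on \(r\) and \(s\). By the argument of Lemma~\ref{oi.lem.determined}, applied to \(d^{\ast}\) through its factorization, this yields \(d^{\ast}(r)\neq d^{\ast}(s)\). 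The difference in treatment is therefore attributable to the value of \(d^{\ast}\) and not to some other artifact. The third step uses clause (e): since \(r\sim_{\tau}s\), the pair is a divergence witness in the sense of Definition~\ref{oi.def.divergence} for \((d^{\ast},\tau)\). This shows that the treatment difference is non-trivial and is not, for instance, a record-identifier difference excluded by Remark~\ref{oi.rem.recordid}. Strictly, only (d) and (f) are needed for ``affects identity-relevant treatment''; (e) strengthens the conclusion to a splitting fault.

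Together, \(d^{\ast}\) is a mechanism within \(B\) that affects identity-relevant treatment, and \(d^{\ast}\notin D_B\). This contradicts the registry-completeness claim for \((B,D_B)\), which is therefore refuted.

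The main obstacle is the gap between ``\(d^{\ast}\notin D_B\)'' and ``not represented in \(D_B\).'' A mechanism could be represented indirectly, for instance as a constituent of, or coincident in value with, a joint mechanism in \(\mathcal{D}_B\). I would handle this in one of two ways. The first is to read ``represented in \(D_B\)'' as membership, which is what clause (a) is phrased against. The second is to add a remark that if \(d^{\ast}\) were equal on \(R\) to some registered or joint mechanism, then the witness would already have been produced by the finite audit of Theorem~\ref{oi.thm.decidable} on \(A_B\), and it is excluded as ``omitted'' by construction. I would adopt the literal membership reading, which makes the refutation immediate.
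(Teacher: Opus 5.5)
Your proof is correct and follows the paper's argument. Clause (a) places \(d^{\ast}\) within \(B\) and outside \(D_B\), clauses (d)--(f) show that it controls a difference in identity-relevant treatment, and the literal membership reading of ``represented in \(D_B\)'' that you adopt is the one the paper uses. One small correction to an aside: clause (e) does not rule out a record-identifier difference, since co-referring records always differ in record identifier. That exclusion comes from the outcome typing of Definition~\ref{oi.def.outcomes}, although nothing in your argument depends on it.
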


\begin{proof}
  The witness exhibits a mechanism within \(B\) and absent from \(D_B\).
  Clauses (d) through (f) establish that it controls a difference
  in identity-relevant treatment between records the declaration merges,
  so it affects identity-relevant treatment.
  The claim that \(D_B\) contains every such mechanism within \(B\) is false.
\end{proof}

\subsection{Family Completeness}

Registry completeness bounds which implementation artifacts the audit has seen.
It does not bound which \emph{surfaces} over those artifacts the audit evaluated.

The gap is created by the closure \(\mathcal{D}_B\).
An identity-relevant use may factor through the joint mechanism \(d_{\{1,2\}}\)
and through neither \(d_1\) nor \(d_2\) alone.
Suppose \(D_B=\{d_1,d_2\}\), suppose the registry-completeness claim is true,
and suppose \(A_B\) contains only the two singleton surfaces.
The use-completeness claim for each singleton can be true,
because the jointly controlled use is not a use of either singleton
in the factorization sense of Definition~\ref{oi.def.surface}(c).
Each examined surface can pass.
The joint surface can nevertheless split a declared co-reference class,
and \(\pi_B\) does not see it.
A simpler instance is a registered artifact
the auditor never admits into \(A_B\) at all.

\begin{definition}[Family-Completeness Claim]
  \label{oi.def.familycomplete}
  A \emph{family-completeness claim} for \((B,D_B,A_B)\)
  asserts that, for every identity-relevant use within \(B\)
  whose controlling mechanisms are represented in \(D_B\),
  \(A_B\) contains at least one audit surface
  through whose mechanism that use factors.
\end{definition}

\begin{definition}[Omitted-Surface Witness]
  \label{oi.def.omittedsurface}
  An \emph{omitted-surface witness}
  for a triple \((B,D_B,A_B)\)
  consists of:
  \begin{enumerate}[label=(\alph*)]
    \item an identity-relevant use \(u^{\ast}\) operating within \(B\);
    \item a nonempty set \(S\subseteq D_B\),
          the associated joint mechanism
          \(d_S\in\mathcal{D}_B\),
          and a map \(\tilde{u}^{\ast}\) such that
          \[
            u^{\ast}
            =
            \tilde{u}^{\ast}\circ d_S
            \quad\text{on }R;
          \]
    \item for every examined surface
          \((d,U_d)\in A_B\),
          there is no map \(v\) such that
          \[
            u^{\ast}=v\circ d
            \quad\text{on }R;
          \]
    \item records \(r,s\in R\) with
          \(r\sim_{\tau}s\); and
    \item evidence that
          \(u^{\ast}(r)\neq u^{\ast}(s)\).
  \end{enumerate}
\end{definition}

\begin{proposition}[Omitted Surface Refutes Family Completeness]
  \label{oi.prop.refutes-family}
  An omitted-surface witness
  refutes the family-completeness claim
  for its triple \((B,D_B,A_B)\).
  Suppose the audit returned \PASS{}
  for every surface in \(A_B\),
  and let
  \[
    A_B^{\ast}
    =
    A_B\cup\bigl\{(d_S,\langle u^{\ast}\rangle)\bigr\}.
  \]
  Then the audit returns \FAIL{} for \(d_S\),
  and by Proposition~\ref{oi.prop.compose}
  the store verdict over \(A_B^{\ast}\) is \FAIL{}.
\end{proposition}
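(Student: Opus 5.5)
The proof has two parts: first refute the family-completeness claim directly, then show that the augmented family fails.

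\emph{Refuting family completeness.} By clause (b) of Definition~\ref{oi.def.omittedsurface}, \(u^{\ast}\) factors through \(d_S\) with \(S\subseteq D_B\). Hence its controlling mechanisms are represented in \(D_B\), and Definition~\ref{oi.def.familycomplete} requires some surface in \(A_B\) through whose mechanism \(u^{\ast}\) factors. Clause (c) states that no examined surface admits such a factorization, so the claim is false. Note that clauses (d) and (e) are not needed for this part. They show the omitted use is identity-relevant in the sense that matters, and they are used below.

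\emph{The augmented surface is admissible.} Next I check that \((d_S,\langle u^{\ast}\rangle)\) is an audit surface in the sense of Definition~\ref{oi.def.surface}.
\begin{itemize}
  \item For (a): \(d_S\in\mathcal{D}_B\) by Definition~\ref{oi.def.joint}.
  \item For (b): \(u^{\ast}\) is an identity-relevant use with finite, computable outcomes, as clause (a) of the witness supplies.
  \item For (c): surface control is exactly the factorization \(u^{\ast}=\tilde{u}^{\ast}\circ d_S\) on \(R\).
\end{itemize}
So \(A_B^{\ast}\) is an examined surface family. The preconditions are those already discharged for the \PASS{} verdicts on \(A_B\). Definition~\ref{oi.def.precond} concerns \(R\), \(\tau\), and \(H\), and these are unchanged. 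Surface control for the new surface has just been established. The audit of \(d_S\) therefore does not return \INDET{}.

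\emph{The audit fails.} With the single use, \(\IT_{d_S}(x)=\langle u^{\ast}(x)\rangle\). Clauses (d) and (e) give \(r\sim_{\tau}s\) together with \(\IT_{d_S}(r)\neq\IT_{d_S}(s)\). This makes \((r,s)\) a divergence witness (Definition~\ref{oi.def.divergence}), so by Proposition~\ref{oi.prop.refutes} the surface \(d_S\) is not faithful. By Definition~\ref{oi.def.verdict} the audit returns \FAIL{}. Proposition~\ref{oi.prop.compose} applied to \(A_B^{\ast}\) then makes \((r,s)\) a witness for the store, and store faithfulness fails.

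The only delicate step is ruling out \INDET{} for the new surface. It requires arguing that the discharged preconditions are surface-independent, and that the computability of \(u^{\ast}\) on \(R\) is part of what the witness supplies. Everything else is direct unfolding of definitions.
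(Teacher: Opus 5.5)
Your proof is correct and follows the paper's argument. Clauses (b) and (c) refute family completeness, clause (b) supplies surface control making \((d_S,\langle u^{\ast}\rangle)\) admissible, clauses (d)--(e) make \((r,s)\) a divergence witness, and Proposition~\ref{oi.prop.compose} lifts the failure to \(A_B^{\ast}\). Your extra step ruling out \INDET{} (the preconditions depend only on \(R\), \(\tau\), and \(H\), so they carry over from the \PASS{} verdicts on \(A_B\)) makes explicit what the paper leaves implicit, though that carry-over tacitly needs \(A_B\neq\varnothing\).
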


\begin{proof}
  Clause (b) establishes that the controlling artifacts
  of \(u^{\ast}\) are represented in \(D_B\).
  Clause (c) establishes that
  \(u^{\ast}\) factors through no examined surface mechanism.
  The family-completeness claim is therefore false.

  Clause (b) also establishes surface control for
  \((d_S,\langle u^{\ast}\rangle)\),
  so it is an admissible audit surface.
  Clauses (d) and (e) make \((r,s)\)
  a divergence witness for that surface.
\end{proof}

The omitted-surface witness is not an omitted-mechanism witness.
Its constituent artifacts are already registered,
so the registry-completeness claim survives it intact.
What it refutes is the coverage of the evaluation,
and the repair is to evaluate the joint surface,
not to disclose a new artifact.

\subsection{Use Completeness}

The two prior boundaries bound which mechanisms the audit has seen
and which surfaces over them it evaluated.
A third bounds which \emph{uses} of an evaluated surface it has seen.

The operational partition \(\pi_d\) is computed from \(U_d\),
the identified identity-relevant uses of \(d\).
Nothing in the computation requires \(U_d\)
to contain every identity-relevant use of \(d\) within \(B\).
An operator who discloses a harmless use of a field
and omits the use that binds it to referent-identifier assignment
supplies a coarser \(\pi_d\),
and the audit may return \PASS{} on a surface that in fact diverges.

\begin{definition}[Use-Completeness Claim]
  \label{oi.def.usecomplete}
  A \emph{use-completeness claim} for \((d,B,U_d)\)
  asserts that every identity-relevant use of \(d\) within \(B\),
  in the sense of Definition~\ref{oi.def.outcomes},
  appears in \(U_d\).
\end{definition}

\begin{definition}[Omitted-Use Witness]
  \label{oi.def.omitteduse}
  An \emph{omitted-use witness} for \((d,B,U_d)\) consists of:
  \begin{enumerate}[label=(\alph*)]
    \item a use \(u^{\ast}\) operating within \(B\)
          with \(u^{\ast}\notin U_d\),
          of an outcome type in Definition~\ref{oi.def.outcomes};
    \item \emph{surface control:} a map \(\tilde{u}^{\ast}\) with
          \[
            u^{\ast}=\tilde{u}^{\ast}\circ d
            \quad\text{on }R,
          \]
          so the omitted use is a use \emph{of \(d\)} in the sense of
          Definition~\ref{oi.def.surface}(c);
    \item records \(r,s\in R\) with \(r\sim_{\tau}s\); and
    \item evidence that \(u^{\ast}(r)\neq u^{\ast}(s)\).
  \end{enumerate}
\end{definition}

Clause (b) is what keeps the expanded surface admissible.
An identity-relevant outcome that does not factor through \(d\)
is not an omitted use of \(d\).
It is jointly controlled,
and Remark~\ref{oi.rem.granularity} routes it
to a joint surface in \(\mathcal{D}_B\).
When the constituent artifacts are already registered,
that outcome is an omitted-\emph{surface} finding
under Definition~\ref{oi.def.omittedsurface},
a gap in the coverage of the evaluation
rather than a gap in the disclosure of artifacts.
It becomes an omitted-mechanism finding
only when a controlling artifact is absent from \(D_B\).

\begin{proposition}[Omitted Use Refutes Completeness and Overturns a Pass]
  \label{oi.prop.refutes-use}
  An omitted-use witness for \((d,B,U_d)\)
  refutes the use-completeness claim for \((d,B,U_d)\).
  If the audit returned \PASS{} for \(d\) on \((R,H)\) under \(U_d\),
  then the audit over
  \(U_d^{\ast}=U_d\frown\langle u^{\ast}\rangle\)
  returns \FAIL{},
  and \((r,s)\) is a divergence witness for it.
\end{proposition}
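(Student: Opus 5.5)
The proof has two parts, mirroring the two claims of Proposition~\ref{oi.prop.refutes-use}. For refutation, clause (a) of Definition~\ref{oi.def.omitteduse} gives a use \(u^{\ast}\) within \(B\), of an outcome type from Definition~\ref{oi.def.outcomes}, with \(u^{\ast}\notin U_d\). Clause (b) shows that \(u^{\ast}\) is a use \emph{of \(d\)} in the factorization sense. Clauses (c) and (d) show that it is identity-relevant in the operative sense, since it treats two declared co-referents differently. So \(u^{\ast}\) is an identity-relevant use of \(d\) absent from \(U_d\), and the claim of Definition~\ref{oi.def.usecomplete} is false. This follows the same pattern as the proof of Proposition~\ref{oi.prop.refutes-completeness}.

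For the overturned pass, I first check that \((d,U_d^{\ast})\) is an admissible audit surface under Definition~\ref{oi.def.surface}. The mechanism \(d\) is unchanged. Each \(u_j\in U_d\) already factors through \(d\), and \(u^{\ast}=\tilde{u}^{\ast}\circ d\) by clause (b), so surface control holds for the extended sequence. Its outcome is finite and computable by (a), so Definition~\ref{oi.def.surface}(b) is met. The preconditions and the history are unchanged from the passing run, since \PASS{} presupposed them, so the verdict cannot be \INDET{}.

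Next I relate the two signatures. Under \(U_d^{\ast}\) the signature is \(\IT^{\ast}_d(r)=\langle\IT_d(r),u^{\ast}(r)\rangle\), so \(\approx^{\ast}_d=\;\Ker\cap\ker u^{\ast}\). By clause (d), \(u^{\ast}(r)\neq u^{\ast}(s)\), hence \(\IT^{\ast}_d(r)\neq\IT^{\ast}_d(s)\). Clause (c) gives \(r\sim_{\tau}s\), and the declared relation is computed from \((\mathfrak{R},R,H)\) alone, so it does not depend on the use sequence. Thus \((r,s)\) is a divergence witness by Definition~\ref{oi.def.divergence}. Proposition~\ref{oi.prop.refutes} then refutes faithfulness, and Definition~\ref{oi.def.verdict} yields \FAIL{}.

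The only delicate point is admissibility. One must confirm that appending \(u^{\ast}\) leaves the audit on the same surface, so that the verdict is \FAIL{} rather than \INDET{} or an escalation to a joint surface. That is exactly the role of clause (b). Note also that \(\IT_d(r)\) and \(\IT_d(s)\) must have agreed before the extension, because the original verdict was \PASS{}. The witness is therefore created entirely by the appended coordinate, and no case analysis is needed.
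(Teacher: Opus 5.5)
Your proposal is correct and follows the paper's proof essentially step for step: clause (a) with the factorization in clause (b) exhibits an identity-relevant use of \(d\) absent from \(U_d\), clause (b) keeps \((d,U_d^{\ast})\) an admissible audit surface, and the appended coordinate separates \(r\) from \(s\) while \(r\sim_{\tau}s\), giving the divergence witness and \FAIL{}. Your extra checks (that the verdict cannot be \INDET{}, and that the old signatures of \(r\) and \(s\) agreed) are sound but inessential. Two small attributions are off: identity relevance already comes from the outcome type in clause (a), so clauses (c) and (d) play no role in the refutation half, and finiteness of the outcomes of \(u^{\ast}\) on \(R\) follows from \(R\) being finite rather than from clause (a).
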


\begin{proof}
  The witness exhibits an identity-relevant use of \(d\) within \(B\)
  that is absent from \(U_d\),
  which refutes the claim.
  By clause (b) the use factors through \(d\),
  so \(\bigl(d,U_d^{\ast}\bigr)\) satisfies
  Definition~\ref{oi.def.surface}(c)
  and remains an audit surface.
  Under \(U_d^{\ast}\) the signature of \(r\) extends by \(u^{\ast}(r)\)
  and that of \(s\) by \(u^{\ast}(s)\),
  which differ,
  so \(\IT_d(r)\neq\IT_d(s)\) while \(r\sim_{\tau}s\).
\end{proof}

The three completeness claims are independent, and the audit is relative to all of them.
Registry completeness can hold while family completeness fails,
and the mechanism responsible is then one the operator disclosed
and the auditor never evaluated.
Registry and family completeness can both hold while use completeness fails,
and the surface responsible is then one the audit examined and cleared.

\begin{remark}[What Licenses the System-Level Reading]
  \label{oi.rem.systemreading}
  Proposition~\ref{oi.prop.compose} combines the surface verdicts
  into the store operational partition \(\pi_B\).
  That composition is sound over the examined family \(A_B\)
  and over each \(U_d\).
  Reading \(\pi_B\) as the rule of sameness the system implements,
  rather than as the rule of sameness
  the evaluated surfaces and their identified uses implement,
  requires three claims:
  \begin{enumerate}[label=(\arabic*)]
    \item registry completeness,
          that \(D_B\) contains every controlling artifact within \(B\);
    \item family completeness,
          that \(A_B\) contains a surface through whose mechanism
          every identity-relevant use factors
          when its controlling artifacts lie in \(D_B\); and
    \item use completeness for each \(d\in A_B\),
          that \(U_d\) contains every identity-relevant use factoring through \(d\).
  \end{enumerate}
  None is established by the procedure.
  Each is refutable by a finite witness,
  and none is provable from the artifact that asserts it.
\end{remark}

\subsection{Scopes of a Finding}

Audit findings therefore have five distinct scopes.
A divergence witness for a surface in \(A_B\)
identifies a fault in the examined implementation,
classified by Definition~\ref{oi.def.classification}.
An omitted-mechanism witness
identifies an incomplete account of the implementation boundary.
An omitted-surface witness
identifies an incomplete evaluation of artifacts already disclosed,
and by Proposition~\ref{oi.prop.refutes-family}
it can overturn a store-level passing verdict
without any artifact having been withheld.
An omitted-use witness
identifies an incomplete account of a surface already examined,
and by Proposition~\ref{oi.prop.refutes-use}
it can overturn a passing verdict on that surface.
A passing verdict establishes only that
no disclosed use of any examined surface
splits a declared co-reference class
on the examined records under the examined history,
and Corollary~\ref{oi.cor.indexing} bounds even that.

\section{Related Work}
\label{oi.sec.related}

\subsection{Entity Resolution and Record Linkage}

Record linkage asks whether two records describe the same real-world entity,
and the probabilistic framework of Fellegi and Sunter
formalized the decision as a likelihood-ratio test
over agreement patterns between record fields~\citep{fellegisunter1969}.
The field has developed extensive machinery for
similarity measures, classification, and evaluation
~\citep{elmagarmid2007duplicate,christen2012datamatching},
for blocking and filtering schemes
that partition records into candidate groups
before pairwise comparison~\citep{papadakis2020blocking},
and for the scaling and quality challenges
of resolution at practical volume~\citep{getoor2012er}.

The resemblance to the present result is real and worth stating plainly.
A blocking key is a mechanism whose value determines
whether two records will be compared,
and it therefore induces a partition of the record domain.
A blocking key that separates true matches
is a mechanism carrying an identity distinction its designer did not intend,
which is structurally the fault this paper isolates.

Three differences separate the settings.

The comparison in entity resolution is against ground truth.
Records are labeled as matches or non-matches by an external standard,
and the matching function is evaluated for error against those labels.
The present result has no ground truth and asks for none.
It compares two artifacts the system itself supplies:
what it declares and what it does.
A divergence witness is a disagreement internal to the system,
not an error against an outside label,
and it is available to an auditor
who has no view on which records truly co-refer.

Co-reference is derived differently.
Entity resolution derives it from attribute similarity
across a pair of records considered on their own.
The declared partition here is derived from a transformation history:
records co-refer when a chain of transformations connects them,
each of whose families is non-breaking under the declared basis.
Two records with identical attributes may fail to co-refer
when no history connects them,
and two records with quite different attributes may co-refer
when a chain of content-preserving transformations does.

The object under evaluation differs.
Entity resolution evaluates a matching function.
This paper evaluates a partition induced by an implementation mechanism
against a partition induced by a declared identity regime,
and diagnoses their divergence against an inventory of identity bases.
Where entity resolution asks how well a matcher performs,
this paper asks which rule of sameness a system is running.

\subsection{Provenance and Data Lineage}

PROV-DM provides a general model of entities,
activities, agents, derivations, generations, uses, associations,
and attributions~\citep{provdm2013}.
It supplies the relations needed to identify transformation occurrences and their evidentiary basis.

A derivation relation does not determine
whether source and result represent the same referent.
A complete provenance history may record
that one artifact was derived from another
without stating whether the transformation preserved
or broke identity under the system's chosen basis.
The present result uses family-labeled histories
to derive the declared partition,
and then compares that partition with behavior.
The question is not only what happened,
but which rule of sameness the system operationally applied.

\subsection{Identity Conditions and Ontology Engineering}

Foundational ontologies provide categories
for entities, processes, qualities, roles, and related distinctions~\citep{arp2015building,gangemi2002dolce,masolo2003wonderweb}.
Formal ontology and OntoClean examine ontological commitment,
identity conditions, rigidity, dependence, and the quality of taxonomic choices
~\citep{guarino1998formalontology,guarino1999roles,guarinowelty2002,guizzardi2005}.

Those approaches support careful selection of what kinds of things a system represents
and which conditions govern their identity.
The problem here begins after that selection.
The declared identity condition may be sound while workflow or application logic implements another,
and the operational identity partition is the object that makes the second condition visible.

\subsection{AI-System Identity and Governance}

Ferrario develops a trustworthiness-based metaphysics of artificial intelligence systems
in which AI-system kinds are fixed by techno-function
and synchronic and diachronic identity are governed
by trustworthiness profiles and compatible trustworthiness levels
~\citep{ferrario2025metaphysics}.
Applied to high-risk systems under the European AI Act,
the account makes explicit that lifecycle governance
presupposes judgments about whether an updated or separately deployed system
remains the same system for regulatory purposes.
Ferrario argues that the Act supplies no internal, auditable criterion for synchronic identity
and proposes a correspondence map and minimal decision flow
for applying the function-plus-trustworthiness criterion
in audit and dispute settings~\citep{ferrario2026aia}.

A subsequent category-theoretic account fixes an AI-system type datum consisting of
a techno-function, trustworthiness profile, and trustworthiness-level function.
It distinguishes equality of trustworthiness level,
directed trustworthiness-preserving reachability,
mutual reachability as state isomorphism,
and natural isomorphism of realized system histories
~\citep{ferrario2026category}.
That account provides a formal hierarchy of weaker and stronger identity criteria,
while leaving their detailed recognition in operational governance and MLOps settings
to further methodology.

These works make one substantive AI identity criterion explicit, inspectable, and formally structured.
The present result addresses a different audit object.
It neither selects the correct substantive identity basis
nor determines whether trustworthiness is
the appropriate basis for AI-system identity.
It takes the effective declared regime as an input
and asks whether disclosed implementation surfaces
induce identity-relevant treatment faithful to that declaration.
Its output is therefore not an AI identity decision flow,
but a comparison of declared and operational partitions
and, on failure, a finite pair witnessing a distinction made by the implementation
that the declaration does not make.

\subsection{Classification and Institutional Infrastructure}

Classification systems shape
the practices and institutions that use them~\citep{bowkerstar1999}.
Longino emphasizes the social and institutional conditions
under which claims become open to criticism and revision~\citep{longino1990}.

A published classification may differ from the classification implemented by the system.
The operational identity partition converts that institutional concern
into a finite computed object for one class of stability-critical distinctions.

\subsection{Versioning, Digital Twins, and Persistent Objects}

Version-control systems,
temporal databases,
and digital-twin architectures distinguish
states, revisions, derivations, and evolving objects.
Digital-twin work emphasizes continuity
between physical or operational systems
and their changing digital representations
~\citep{grievesvickers2017,voas2025digitaltwin}.

These systems may preserve detailed histories while leaving the identity criterion implicit.
Version succession does not by itself determine referent identity.
Section~\ref{oi.sec.worked} shows a version counter
carrying an identity criterion finer than any basis
the system's own inventory supplies,
which is a case the versioning literature has the machinery to express
and no standing reason to look for.

\subsection{Contradiction Detection}

Recent accountability protocols for adversarial supply chains
show the value of signed event claims,
append-only causal histories,
and contradiction proof objects.
ECO/CPO-DAG~\citep{cochinescu2026ecocpo},
for example,
treats contradiction detection as a supplemental validation layer
rather than a truth-establishing consensus mechanism,
and compiles a finite, self-verifying object
that binds two signed claims to the violated rule.

The present result shares that commitment to inspectable finite witnesses rather than scores.
The approaches differ along an axis the protocol itself marks:
\emph{a party that never contradicts itself
  is invisible to contradiction detection.}

A system that consistently applies an undeclared identity rule
for what counts as the same lot, the same consignment,
the same rule, or the same regulated entity
produces no contradiction and emits no contradiction proof object.
Contradiction detection finds inconsistent claims within a declared rule system.
The divergence witness finds a rule of sameness the system never declared.

\subsection{Provenance and Accountability for AI Agents}

Recent work argues that accountability for AI agents
requires structural provenance rather than isolated component logs.
Hu et al.\ position explicit provenance across the agentic lifecycle
as necessary for assigning responsibility
when harm emerges from compositions
that no single party designed~\citep{hu2026provenance}.
Ojewale, Suresh, and Venkatasubramanian
propose audit trails as a tamper-evident, system-agnostic ledger
linking technical provenance to governance records~\citep{ojewale2026audit}.
PROV-AGENT extends W3C PROV to capture prompts, responses, decisions,
and agent interactions across end-to-end workflows~\citep{souza2025provagent}.
Work on auditable agents treats auditability as the system property
that makes accountability possible~\citep{nian2026auditable}.

These approaches expose the provenance an audit needs.
The operational identity partition supplies a complementary question:
whether the co-reference and persistence behavior the system exhibits
follows the identity basis it declares.

Otsuka, Toyoda, and Leung define AI identity through correspondence
between what an agent is declared to be
and what it is observed to do~\citep{otsuka2026identity}.
Their declaration-observation gap is closely related.
The present result isolates one formal instance of it,
gives the observed side a computable form,
and classifies the gap against an inventory of identity bases.

\subsection{Distinct Contribution}

The contribution is the conjunction of five elements:
\begin{enumerate}[label=(\arabic*)]
  \item the operational identity partition,
        the rule of sameness a deployed system applies,
        made computable over a disclosed mechanism registry;
  \item a surface-control condition
        under which a difference in identity treatment
        is attributable to a named mechanism,
        together with the record-identifier exclusion
        that keeps the test from firing on every correct system;
  \item faithfulness as a refinement relation
        between declared and operational partitions,
        with a finite witness refuting it,
        available for every regime in the inventory;
  \item a four-way lattice classification of divergence,
        when the sibling is informative inside the declared classes,
        separated from the global condition of regime substitution
        that it does not imply,
        with a finite version-field construction
        occupying the sub-sibling case;
  \item non-monotonicity of the passing verdict
        under extension of the transformation history alone,
        which indexes a passing audit to a history rather than to a store.
\end{enumerate}

The audit is disclosure-relative at three distinct levels:
the artifacts represented in \(D_B\),
the surfaces included in \(A_B\),
and the identity-relevant uses identified in each \(U_d\).
Each level carries a completeness claim
that the procedure cannot discharge
and a finite witness that refutes it.

\section{Limits}
\label{oi.sec.limits}

The result is bounded in nine ways.

First, the audit is relative to the registry.
It evaluates disclosed or discovered mechanisms
and does not prove that no further identity-relevant mechanism
exists outside the examined boundary or will arise in future behavior.
Definition~\ref{oi.def.omitted} states the object
whose later discovery refutes a completeness claim,
and Remark~\ref{oi.rem.registry} states why
the registry constraint cannot be dropped.

Second, the audit is relative to the surfaces it evaluated.
Registry completeness does not entail that every surface
over the registered artifacts was placed in \(A_B\).
A use jointly controlled by two registered fields may factor through neither alone,
so the surface carrying it may split a declared co-reference class
while each constituent field, taken alone, splits nothing.
The audit may not have included that joint surface in \(A_B\),
so every evaluated singleton surface can pass while the omitted joint surface fails.
Definition~\ref{oi.def.familycomplete} states the claim
and Definition~\ref{oi.def.omittedsurface} the object that refutes it.

Third, the audit is relative to the identified uses of each evaluated surface.
A surface cleared by the procedure
may carry an identity-relevant use the audit did not see,
and Proposition~\ref{oi.prop.refutes-use}
shows that such a use overturns the verdict.
The operational partition is computed from disclosed behavior,
and it is the rule of sameness the mechanism applies
only under the claim of Definition~\ref{oi.def.usecomplete}.

Fourth, the declared partition depends on
the accuracy and completeness of the family-labeled history,
which Definition~\ref{oi.def.precond} states as a precondition.
An inaccurate or incomplete history prevents the artifact from representing the examined transformations,
and the correct verdict is \INDET{}.

Fifth, the classification is relative to the imported inventory,
and it is local to the declared co-reference classes.
When the sibling also splits an examined declared class,
the four cases of Definition~\ref{oi.def.classification}
locate a divergence against that sibling
on pairs the declaration merges,
and say nothing about pairs it separates.
Otherwise the divergence is unpositioned.
Sibling alignment does not establish regime substitution,
by Proposition~\ref{oi.prop.aligned-not-substitution},
and no case of the classification names the basis
the implementation carries.
Identifying that basis remains work for the auditor.

Sixth, faithfulness is one-directional by design.
A surface that merges records the declaration separates
is not reported as a fault,
for the reason given in Remark~\ref{oi.rem.asymmetry}.
Conflation of declared-distinct referents at the level of the store
is a real failure and is outside the scope of a single-surface audit.

Seventh, verdicts are indexed to \((R,H)\).
Proposition~\ref{oi.prop.nonmonotone} shows
that a passing verdict can be overturned
by extending the history alone,
with the record domain held fixed.
A surface faithful on the examined artifact
may diverge on a larger one,
and the classification of a divergence may change with it.

Eighth, the procedure decides no downstream substantive question.
It does not choose the correct identity basis,
and either sibling may be appropriate for a given system.
It does not determine causal truth,
normative correctness,
legal interpretation,
legal authority,
responsibility,
or institutional legitimacy.
It preserves a prior condition for examining those questions:
stable and inspectable reference to what the claims concern.

Ninth, the worked application is illustrative rather than empirical.
The paper does not evaluate the audit against a deployed record system.
Such an evaluation would test the practical adequacy of the disclosed registry,
examined surfaces, identified uses, and record-indexed encodings,
rather than the finite decidability result established here.

\section{Conclusion}
\label{oi.sec.conclusion}

A record system answers the co-reference question twice:
once in what it declares
and once in what it does.
The declared answer partitions the records into co-reference classes.
The implementation supplies a second partition through its identity-relevant treatment of those records.
This paper formalizes that second answer as the operational identity partition
and compares it with the declared partition.

The comparison is a refinement test.
A mechanism is faithful when it splits no declared co-reference class,
and a single divergence witness refutes faithfulness.
Faithfulness composes across the examined surface family:
a failure on any surface refutes store faithfulness,
while the store passes only when every examined surface is faithful.

When faithfulness fails and the sibling basis also splits an examined declared class,
the divergence is classified against that alternative basis
in one of four sibling-relative cases.
Otherwise the divergence is unpositioned.
The classification is local to the declared co-reference classes.
Alignment with the sibling inside those classes
does not establish that the system implements the sibling regime,
and a single witness does not establish alignment.
A version counter incremented on every textual edit
splits declared classes strictly more finely than either imported basis.
The worked version-field example therefore inhabits the sub-sibling case.

The audit is decidable, three-valued, and disclosure-relative at three distinct levels:
the artifacts represented in \(D_B\), the surfaces included in \(A_B\),
and the identity-relevant uses identified in each \(U_d\).
Each level carries a completeness claim that the procedure cannot discharge and a finite witness that refutes it.

A passing verdict is also history-relative.
Because declared co-reference is a transitive closure,
extending the history alone,
with the record domain and operational treatment unchanged,
can merge declared classes
and create a witness among records already examined.
A passing audit is a statement about a history,
not a certificate about a store.

\emph{Neutral Substrates} states the constraint on shared foundational commitments.
\emph{Referential Regimes} states how identity persists under transformation.
Both describe what a system should declare.
The present result supplies the other half of the comparison:
a computable account of the identity relation induced by the disclosed implementation,
together with explicit limits on how far that account reaches.

\section*{Statements and Declarations}

\subsection*{Acknowledgments}
The author thanks the anonymous reviewers of earlier papers in this series
for comments that improved the framing, organization, and presentation of this work.
This research made use of SciX,
a scientific literature search and discovery platform,
whose related-literature tools helped identify relevant work across disciplinary boundaries.

\subsection*{Author Contributions}
The author is the sole contributor to this work
and is responsible for all aspects of the research, authorship, and publication.

\subsection*{Code Availability}
A reference implementation that checks the finite constructions of this paper
is available~\citep{case2026oiverification}.
It provides two independent checkers of the finite comparison procedure,
an explicit quadratic reference and the near-linear labelled procedure
of Theorem~\ref{oi.thm.decidable},
and confirms they agree on the worked examples (Example~\ref{oi.ex.loc-obj-fork},
Propositions~\ref{oi.prop.aligned-not-substitution}
and~\ref{oi.prop.witness-not-alignment},
Remark~\ref{oi.rem.nonvacuity}, the Section~\ref{oi.sec.worked} example,
and Proposition~\ref{oi.prop.nonmonotone}) and across small and randomized instances.
The implementation does not discover
mechanisms, evaluate a deployed system, or discharge the completeness
claims of Section~\ref{oi.sec.disclosure}.

\subsection*{Use of AI-Assisted Tools}
AI-assisted tools were used for editing, formatting, and consistency checking.
The author reviewed all suggestions and is solely responsible for the content.

\subsection*{Declaration of Conflicting Interest}
The author declares no potential conflicts of interest.

\end{document}